\def\BibTeX{{\rm B\kern-.05em{\sc i\kern-.025em b}\kern-.08em
    T\kern-.1667em\lower.7ex\hbox{E}\kern-.125emX}}
\newcommand\sysname{Byzcuit\xspace}
\newcommand\sysnamereplay{\texttt{byzcuit}\xspace}
\newcommand\sysnamebaseline{\texttt{byzcuit-baseline}\xspace}
\newcommand\chainspace{Chainspace\xspace}
\newcommand\ethereum{Ethereum\xspace}
\newcommand\omniledger{Omniledger\xspace}
\newcommand\rapidchain{RapidChain\xspace}
\newcommand\sbac{S-BAC\xspace}
\newcommand\atomix{Atomix\xspace}
\newcommand\rscoin{RSCoin\xspace}
\newcommand\bftsmart{\textsc{bft-SMaRt}\xspace}
\newcommand\shard[1]{\emph{shard}\xspace#1\xspace}
\newcommand\Shard[1]{\emph{Shard}\xspace#1\xspace}
\newcommand\preacceptt{\textsf{pre-accept}($T$)\xspace}
\newcommand\preabortt{\textsf{pre-abort}($T$)\xspace}
\newcommand\preaccepttt{\textsf{pre-accept}($T'$)\xspace}
\newcommand\preacceptts{\textsf{pre-accept}($T,s_T$)\xspace}
\newcommand\preabortts{\textsf{pre-abort}($T,s_T$)\xspace}
\newcommand\acceptt{\textsf{accept}($T$)\xspace}
\newcommand\abortt{\textsf{abort}($T$)\xspace}
\newcommand\aborttt{\textsf{abort}($T'$)\xspace}
\newcommand\acceptts{\textsf{accept}($T,s_T$)\xspace}
\newcommand\abortts{\textsf{abort}($T,s_T$)\xspace}
\newcommand\myrow[1]{row\xspace\textsf{#1}\xspace}
\newcommand\mycolumn[1]{column\xspace\textsf{#1}\xspace}
\newcommand\shardled{shard-led\xspace}
\newcommand\Shardled{Shard-led\xspace}
\newcommand\clientled{client-led\xspace}
\newcommand\Clientled{Client-led\xspace}
\newcommand\activeObj{`active'\xspace}
\newcommand\inactiveObj{`inactive'\xspace}
\newcommand\locked{`locked'\xspace}
\newcommand\attacker{attacker\xspace}
\newcommand\prerecorded{prerecorded\xspace}
\newcommand\prerecords{prerecords\xspace}
\newcommand{\eg}{\textit{e.g.}\@\xspace}
\newcommand{\ie}{\textit{i.e.}\@\xspace}
\newcommand{\via}{\textit{via}\@\xspace}
\newcommand\mypara[1]{\vspace{0.05in} \noindent \textbf{#1.}}
\newcommand\para[1]{\vspace{0.05in} \noindent \textbf{#1.}}
\def\first{({\it i})\xspace}
\def\second{({\it ii})\xspace}
\def\third{({\it iii})\xspace}
\def\fourth{({\it iv})\xspace}
\definecolor{verylightgray}{gray}{0.8}
\newcolumntype{L}{l<{\hspace{1cm}}}
\newcolumntype{C}{c<{\hspace{1cm}}}
\newcolumntype{D}{c<{\hspace{0.3cm}}}
\newtheorem{theorem}{Theorem}
\newtheorem{lemma}{Lemma}
\newtheorem{SecAssumption}{Assumption}
\begin{document}

\title{
Replay Attacks and Defenses Against Cross-shard Consensus in Sharded Distributed Ledgers
}

\author{
Alberto Sonnino, Shehar Bano, Mustafa Al-Bassam, George Danezis \\
University College London \& \texttt{chainspace.io}
}

\maketitle

\begin{abstract}
We present a family of replay attacks against sharded distributed ledgers targeting cross-shard consensus protocols, such as the recently proposed \chainspace and \omniledger. They allow an attacker, with network access only, to double-spend or lock resources with minimal efforts. The attacker can act independently without colluding with any nodes, and succeed even if all nodes are honest; most of the attacks can also exhibit themselves as faults under periods of asynchrony. These attacks are effective against both \shardled and \clientled cross-shard consensus approaches. 
We present \sysname---a new cross-shard consensus protocol that is immune to those attacks. We implement a prototype of \sysname and evaluate it on a real cloud-based testbed, showing that our defenses impact performance minimally, and overall performance surpasses previous works.
\end{abstract}

\begin{IEEEkeywords}
Distributed Ledgers, Sharding, Attacks
\end{IEEEkeywords}

\section{Introduction} 

\label{sec:introduction}

Sharding is one of the key approaches to address blockchain scalability issues~\cite{Bano:2017b}, and a growing number of systems are implementing sharded blockchains~\cite{chainspace, kokoris2017omniledger, rscoin, rapidchain, elastico, Bano:2017b}.  
 The key idea is to create groups (or shards) of nodes that handle only a subset of all transactions and system state, relying on classical Byzantine Fault Tolerance (BFT) protocols for reaching \emph{intra-shard consensus}. These systems achieve optimal performance and scalability because: \first non-conflicting transactions can be processed in parallel by multiple shards; and \second the system can scale up by adding new shards. 
This separation of transaction handling across shards is not perfectly `clean'---a transaction might rely on data managed by multiple shards, requiring an additional step of \emph{cross-shard consensus} across the concerned shards.
An atomic commit protocol (such as the two-phase commit protocol~\cite{gray1978notes}) typically runs across all the concerned shards 
to ensure the transaction is accepted by all or none of those shards.



We present the first replay attacks on cross-shard consensus in sharded blockchains. %
An \attacker can launch these attacks with minimal effort, without subverting any nodes, and assuming a weakly synchronous network (and in some cases, without relying on any network assumption)---even when the byzantine safety assumptions are satisfied.
These attacks compromise key system properties of safety and liveness, effectively enabling the \attacker to double-spend coins (or any other objects managed by the blockchain) and create coins out of thin air.
Our attacks apply to the two main approaches to achieve cross-shard consensus~\cite{Bano:2017b}: \first \shardled protocols that only involve the concerned shards, and require no external entity for coordination (\Cref{shard-led-consensus}); and \second \clientled protocols that are coordinated by the client (\Cref{client-led-consensus}).

We concretely sketch the replay attacks in the context of two representative systems: \chainspace~\cite{chainspace}
as an example of \shardled protocols; and \omniledger~\cite{kokoris2017omniledger}  
as an example of \clientled protocols. Not only those systems were recently presented at top security conferences, but they form the basis of numerous start-ups and open-source projects such as chainspace.io\footnote{https://chainspace.io} and Harmony\footnote{https://harmony.one}. 
For each of the two cross-shard consensus approaches, Appendix~\ref{sec:prerecoring} describes how an \attacker can actively stage the attack by eliciting from the system the messages to replay (in contrast to passively observing the network traffic, and waiting to detect and record the target messages).
We also discuss the feasibility of these attacks and their real-world impact; and we responsibly disclosed them to the concerned companies.
We implement and open-source\footnote{\url{https://github.com/sheharbano/byzcuit/tree/replay-attacks}} a demo of the replay attacks described in this paper (Section~\ref{attack-overview}).

The replay attacks we present are generic and apply to other systems that are based on similar models, like \rapidchain~\cite{rapidchain}. 
\ethereum's cross-shard ``yanking'' proposal~\cite{yanking} also faces similar challenges; \Cref{sec:ethereum} describes their cross-shard consensus protocol and compares their current proposal to mitigate cross-shard replay attacks with this work. We note that account-based blockchains like \ethereum defend against transactions replay using account sequence numbers, in an entirely different context; \ie each account holds a monotonically increasing counter to prevent attackers from re-submitting old transactions to the network.
On the other hand, this work focuses on attacks due to replaying messages in cross-shard atomic commit protocols. 
Based on our detailed analysis of replay attacks, we develop a defense strategy (\Cref{defense}).

Drawing insights from our analysis of performance trade-offs and replay attack vulnerabilities in existing \shardled and \clientled cross-shard consensus protocols, we present a hybrid system, \sysname (\Cref{sec:byzcuit}). It combines useful features from both these design approaches to achieve high performance and scalability, and leverages our proposed defense to achieve resilience against replay attacks. \sysname employs a Transaction Manager to coordinate cross-shard communication, reducing its cost to $O(n)$ communication, between $n$ shards, in the absence of faults.
We implement a prototype of \sysname in Java as a fork of the \chainspace code~\cite{chainspace}, and release it as an open-source project\footnote{\url{https://github.com/sheharbano/byzcuit}}. We evaluate \sysname on a real cloud-based testbed under varying transaction loads and show that \sysname has a client-perceived latency of less than a second, even for a system load of 1000 transactions per second (tps). \sysname's transaction throughput scales linearly with the number of shards by 250--300 tps for each shard added, handling up to 1550 tps with 10 shards---which is about 8 times higher than \chainspace with a similar setup. We quantify the overhead of our replay defenses and find that as expected those reduce the throughput by 20--250 tps.

\para{Contributions} We make the following key contributions: we \first develop the first replay attacks against \shardled and \clientled cross-shard consensus protocols, and illustrate their impact on important academic and implemented designs; and 
\second present defenses; 
\third design a hybrid, new system \sysname with improved performance trade-offs, and which integrates our proposed defense to achieve resilience against the replay attacks; 
and \fourth, we implement a prototype of \sysname and evaluate its performance and scalability on a real distributed set of nodes and under varying transaction loads, and illustrate how it is superior to previous proposals.





\section{Background and Related Work}
\label{background}
We present background and related work on cross-shard consensus protocols.


\mypara{Sharded blockchains}
%
 Earlier systems like Bitcoin~\cite{bitcoin} probabilistically elect a single node which can extend the blockchain. However, such systems assume synchrony, have no finality (\ie, forks can exist and be eventually accepted) and low performance (\ie, high latency and low throughput). Consequently, there has been a shift to committee-based designs~\cite{Bano:2017b} where a group of nodes collectively extends the blockchain typically \via classical byzantine fault tolerance (BFT) consensus protocols such as PBFT~\cite{pbft}. While these systems offer better performance, single-committee consensus is not scalable---as every node handles every transaction, adding more nodes to the committee decreases throughput due to the increased communication overhead. 
This motivated the design of \emph{sharded} systems, where multiple committees handle a subset of all the transactions---allowing parallel execution of transactions. Every committee has its own blockchain and set of objects (or unspent transaction outputs, UTXO) that they manage. Committees run an \emph{intra-shard consensus protocol} (\eg, PBFT) within themselves, and extend the blockchain in parallel. 

\mypara{Cross-shard consensus}
In sharded systems, some transactions may operate on objects handled by different shards, effectively requiring the relevant shards to additionally run a \emph{cross-shard consensus protocol} to enable agreement across the shards. If any of the shards relevant to the transaction rejects it, all the other shards should likewise reject the transaction to ensure atomicity.

The typical choice for implementing cross-shard consensus is the two-phase atomic commit protocol~\cite{gray1978notes}. This protocol has two phases which are run by a \emph{coordinator}. In the first \emph{voting} phase, the nodes tentatively write changes locally, lock resources, and report their status to the coordinator. If the coordinator does not receive status message from a node (\eg, because the node crashed or the status message was lost), it assumes that the node's local write failed and sends a rollback message to all the nodes to ensure any local changes are reversed, and locks released. If the coordinator receives status messages from all the nodes, it initiates the second \emph{commit} phase and sends a commit message to all the nodes so they can permanently write the changes and unlock resources. In the context of sharded blockchains, the atomic commit protocol operates on shards (which make the local changes associated with the voting phase \via an intra-shard consensus protocol like PBFT), rather than individual nodes. A further consideration is who assumes the role of the coordinator.     

\mypara{Related Work}
Replay attacks in general have seen extensive study in the security literature.
This is the first paper that presents replay attacks on cross-shard consensus protocols.
Traditionally, the most stringent threat models assumed by consensus protocols involve byzantine adversaries who are able to control or subvert consensus nodes and cause them to behave arbitrarily. 
Repurposing those protocols to open permissionless networks (\eg, blockchains) opens up new attack avenues such as replay attacks as shown in this paper.
There are currently two key approaches to cross-shard consensus~\cite{Bano:2017b}. The first approach involves \emph{\clientled protocols} (such as \omniledger~\cite{kokoris2017omniledger} and \rscoin~\cite{rscoin}), where the client acts as a coordinator. These protocols assume that clients are incentivized to proceed to the unlock phase. While such incentives may exist in a cryptocurrency application where an unresponsive client loses its own coins if the inputs are permanently locked, these do not however hold for a general-purpose platform where transaction inputs may have shared ownership.
The second approach involves \emph{\shardled protocols} (such as \chainspace~\cite{chainspace}, Rapidchain~\cite{rapidchain} and Elastico~\cite{elastico}), where shards collectively assume the role of a coordinator.
All the concerned shards collaboratively run the protocol between them. This is achieved by making the entire shard act as a `resource manager' for the transactions it handles.
We describe our replay attacks in the context of two representative systems: \chainspace~\cite{chainspace}  
as an example of \shardled protocols (\Cref{shard-led-consensus}); and \omniledger~\cite{kokoris2017omniledger}  
as an example of \clientled protocols (\Cref{client-led-consensus}).
We provide a more detailed description of these systems in the relevant sections.

\section{Attack Overview} 

\label{attack-overview}

Sections~\ref{shard-led-consensus}~and~\ref{client-led-consensus} discuss replay attacks on both \shardled and \clientled cross-shard consensus protocols, respectively. We present a high-level description of these attacks, the threat model, demo attack implementation, and the notation used in this paper.

\mypara{Replay Attacks on Cross-Shard Consensus}
The \attacker records a target shard's responses to 
the atomic commit protocol, and replays them during another instance of the protocol. We present 
\first attacks against the first phase (\emph{voting}), and \second attacks against the second phase (\emph{commit}) of the atomic commit protocol.

To attack the first phase (\emph{voting}) of the atomic commit protocol, the \attacker replaces messages generated by the target shard by replaying pre-recorded messages. In practice, the \attacker does not \emph{replace} those messages---it achieves a similar result by making its replayed messages arrive at the coordinator faster (racing the target shard's original message), exploiting the fact that the coordinator makes progress based on the first message it receives. Replaying messages in this fashion enables the \attacker to compromise the system safety (by creating inconsistent state on the shards) and/or liveness (by causing valid transactions to be rejected). 

To attack the second phase (\emph{commit}) of the atomic commit protocol, the \attacker simply replays prerecorded messages to target shards, and compromises consistency. The \attacker can replay those messages at any time of its choice, and does not rely on any racing condition as in the previous case.


\mypara{Threat Model} 
The \attacker can successfully launch the described attacks without colluding with any shard nodes, and under the BFT honest majority safety assumption for nodes within shards (\ie, the attacks are effective even if \emph{all} nodes are honest). We assume an \attacker that can observe and record messages generated by shards; this can be achieved by \first monitoring the network, or \second reading the blockchain (which is more practical).
The \attacker can be an external observer that passively collects the target messages at the level of the network, or it can act as a client and actively interact with the system to elicit the target messages.
The attacks against the first phase of the atomic commit protocol (Sections~\ref{chainspace-attack-1} and \ref{omniledger-attack-1}) assume a weakly synchronous network  in which an \attacker may delay messages and race target shards by replaying pre-recorded messages. 
The attacks against the second phase of the atomic commit protocol (Section~\ref{chainspace-attack-2}~and~\ref{omniledger-attack-2}) do not make any such assumptions on the underlying network. 

\para{Attack Implementation}
We implemented a demo of the replay attacks against \chainspace~\cite{chainspace}, as an example of systems that implement shard-led cross-shard consensus protocol, in Java.\footnote{Attacks against systems with client-led cross-shard consensus such as \omniledger~\cite{kokoris2017omniledger} can be similarly implemented.} 
We are open-sourcing the demo of our attacks\footnote{\url{https://github.com/sheharbano/byzcuit/tree/replay-attacks}}, and a document describing a step-by-step tutorial to execute the attacks\footnote{\url{https://github.com/sheharbano/byzcuit/blob/master/docs/Chainspace-Replay-Attack-Demo.pdf}}. 
The demo shows, in the context of a simple payment application that supports account creation and coin transfer, how an attacker can use the replay attacks described in this paper to create coins out of thin air. 
Note that the attacks do not rely on any strict timing assumptions---the same attacker could control the accounts of both payer and payee, as well as the client.

\mypara{Notation}
Operations on the blockchain are specified as \emph{transactions}. A transaction defines some transformation on the blockchain state, and has input and output \emph{objects} (such as UTXO entries). An object is some data managed by the blockchain, such as a bank account, a specific coin, or a hotel room. 
 For example, $T(x_1,x_2)\rightarrow (y_1,y_2,y_3)$ represents a transaction with two inputs, $x_1$ managed by \shard{1} and $x_2$ managed by \shard{2}; and three outputs, $y_1$ managed by \shard{1}, $y_2$ managed by \shard{2}, and $y_3$ managed by \shard{3}.
We call the shards that manage the input objects \emph{input shards}, and the shards that manage the output objects \emph{output shards}. It is possible for a shard to be both the input and output shard.
Objects can be in two states: \emph{active} (on unspent) objects are available for being processed by a transaction; and \emph{inactive} (or spent) objects cannot be processed by any transaction.
Additionally, some systems also associate \emph{locked} state with objects that are currently being processed by a transaction to protect against manipulation by other concurrent transactions involving those objects. 
The attacks we describe in this paper generalize to transactions with $k$ inputs and $k'$ outputs managed by an arbitrary number of shards.


\section{\Shardled Cross-Shard Consensus} 
\label{shard-led-consensus}

\begin{figure}[t]
\centering
\includegraphics[width=.45\textwidth]{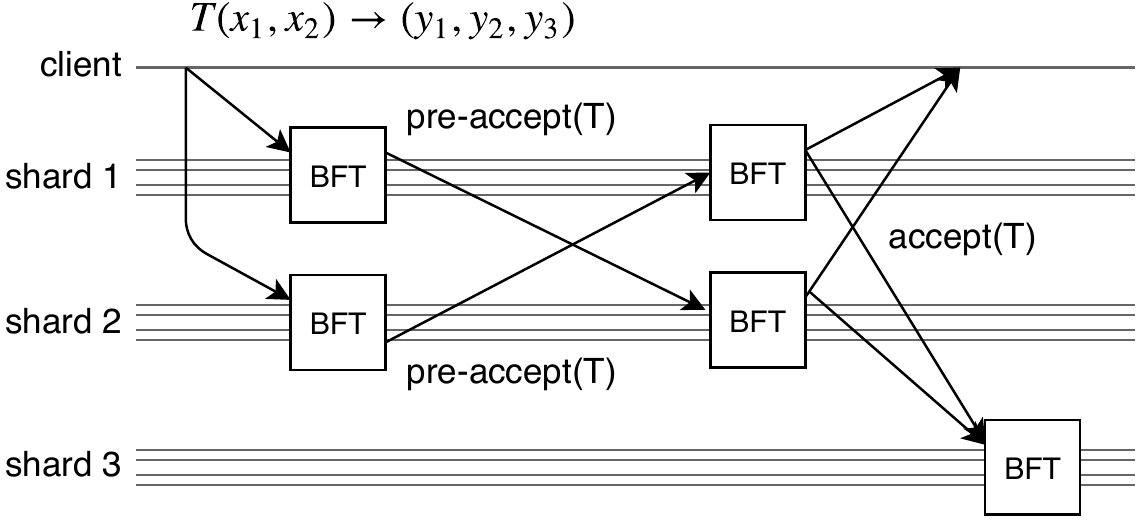}
\caption{\footnotesize An example execution of \sbac for a valid transaction $T(x_1,x_2)\rightarrow (y_1,y_2,y_3)$ with two inputs ($x_1$ and $x_2$, both are active) and three outputs ($y_1,y_2,y_3$), where the final decision is \acceptt. 
}
\label{fig:sbac}
\end{figure}

In \shardled cross-shard consensus protocols, the shards collectively take on the role of the coordinator in the atomic commit protocol. We describe replay attacks on \shardled cross-shard consensus protocols.  To make the discussion concrete, we illustrate these attacks in the context of \chainspace~\cite{chainspace} (\Cref{chainspace}), though we note that these attacks can be generalized to other similar systems. We discuss how the \attacker can record shard messages to replay in future attacks (\Cref{chainspace-message-recording}). In Sections~\ref{chainspace-attack-1}~and~\ref{chainspace-attack-2}, we describe replay attacks on the first and second phase of the cross-shard consensus protocol, and discuss the real-world impact of these attacks (\Cref{chainspace-discussion}).  

\subsection{\chainspace Overview} 
\label{chainspace}

\chainspace uses a \shardled cross-shard consensus protocol called \sbac. 
The client submits a transaction to the input shards.
Each shard internally runs a BFT protocol to tentatively decide whether to accept or abort the transaction locally, and broadcasts its local decision (\preacceptt or \preabortt) to other relevant shards.
\Cref{fig:sbac-original-fsm} shows the state machine representing the life cycle of objects in \chainspace. A shard generates \preabortt if the transaction fails local checks (\eg, if any of the input objects are \inactiveObj or \locked).
 If a shard generates \preacceptt, it changes the state of the input objects to \locked.
This is the first step of \sbac, and is equivalent to the voting phase in the two-phase atomic commit protocol (\Cref{background}).

Each shard collects responses from other relevant shards, and commits the transaction if all shards respond with \preacceptt, or aborts the transaction otherwise.
This is the second step of \sbac, and is equivalent to the commit phase in the two-phase atomic commit protocol (\Cref{background}).
The shards communicate this decision to the client as well as the output shards by sending them the \acceptt or \abortt messages.
If the shard's decision is \acceptt, it changes the input object state to \inactiveObj.
If the shard's decision is \abortt, it changes the input object state to \activeObj (effectively unlocking it).
Upon receiving \acceptt, the client concludes that the transaction was committed, and the output shards create the output objects (with the state \activeObj) of the transaction.

\begin{figure}[t]
\centering
\includegraphics[width=.45\textwidth]{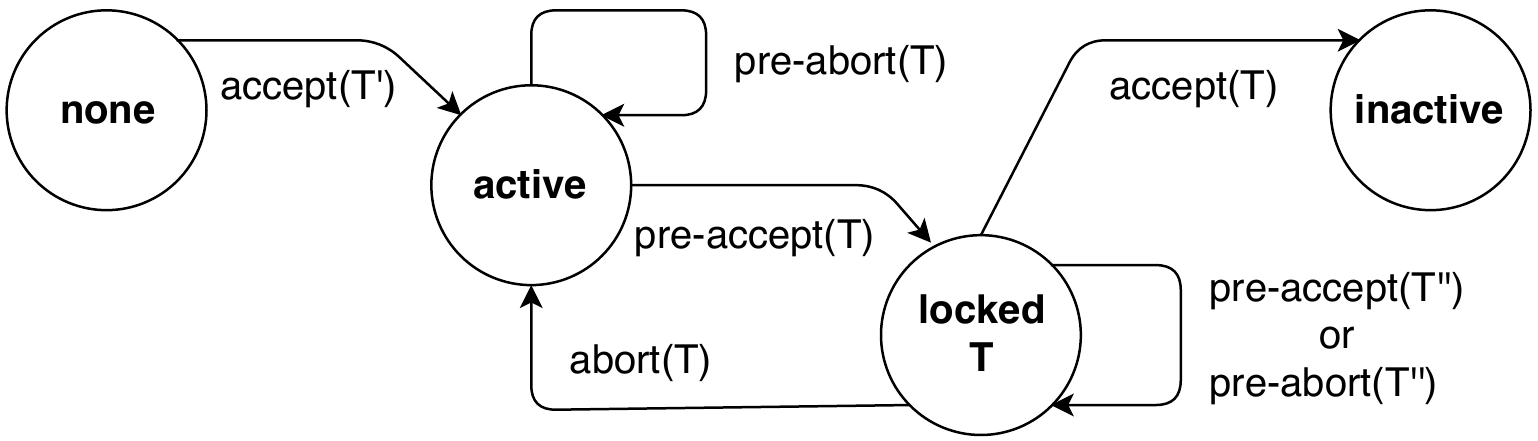}
\caption{\footnotesize State machine representing the life cycle of \chainspace objects. An object becomes \activeObj as a result of a previous successful transaction. The object state changes to \locked if a shard locally emits \preacceptt in the first phase of the cross-shard consensus protocol for a transaction $T$. A locked object cannot be processed by other transactions $T{''}$. If the second phase of the protocol results in \acceptt, the object becomes \inactiveObj; alternatively, if the result is \abortt the object becomes \activeObj again and is available for being processed by other transactions.}
\label{fig:sbac-original-fsm}
\end{figure}

\Cref{fig:sbac} shows an example execution of \sbac for a valid transaction $T(x_1,x_2)\rightarrow (y_1,y_2,y_3)$ with two inputs ($x_1$ and $x_2$, both are active) and three outputs ($y_1,y_2,y_3$), where the final decision is \acceptt. 
The client submits $T$ to \shard{1} and \shard{2}. 
Upon receiving $T$, both \shard{1} and \shard{2} confirm that the transaction is to commit, and emit \preacceptt at the end of the first phase of \sbac. 
Each shard receives \preacceptt from the other shard, and emits \acceptt at the end of the second phase of \sbac. 
As a result, the input objects $x_1$ and $x_2$ become inactive, and the output shards respectively create objects $y_1$, $y_2$, and $y_3$.

\subsection{Message Recording}

\label{chainspace-message-recording}

Prior to the replay attacks, the \attacker records responses generated by shards. The \attacker can record shard responses in the first phase of \sbac (\ie, \preacceptt or \preabortt), enabling the family of attacks described in \Cref{chainspace-attack-1}. The \attacker can also record shard responses in the second phase of \sbac (\ie, \acceptt or \abortt), enabling the family of attacks described in \Cref{chainspace-attack-2}. 
In the general case, the attacker passively collects the messages either by sniffing the network on protocol executions, or by downloading the blockchain and selecting the messages to replay\footnote{Since those messages need to be recorded on chain for verification, just using transport layer encryption between nodes is not effective.}. \Cref{sec:prerecoring-chainspace} shows how the \attacker can act as client to actively elicit the messages necessary for the attacks---this empowers the \attacker to actively orchestrate the attacks.


\subsection{Attacks on the First Phase of \sbac} 
\label{chainspace-attack-1}

\begin{table*}[t]
\centering
\footnotesize
\begin{tabular}{c c C  c c c}
\toprule
\multicolumn{3}{c}{\small Phase 1 of \sbac} & \multicolumn{3}{c}{\small Phase 2 of \sbac}\\
\noalign{\smallskip}
&\Centerstack{\textbf{Shard 1} \\ (potential victim)} & \Centerstack{\textbf{Shard 2} \\ (potential victim)} & \Centerstack{\textbf{Shard 1} \\ (potential victim)} & \Centerstack{\textbf{Shard 2} \\ (potential victim)} & \Centerstack{\textbf{Shard 3} \\ (potential victim)}\\
\noalign{\smallskip}
\toprule

\rowcolor{verylightgray}
1 & \Centerstack{\preacceptt \\ lock $x_1$} & \Centerstack{\preacceptt \\ lock $x_2$} &
\Centerstack{\acceptt \\ create $y_1$; inactivate $x_1$} & \Centerstack{\acceptt \\ create $y_2$; inactivate $x_2$} & \Centerstack{- \\ create $y_3$} \\
\midrule
2 & $\rhd$\preabortt & & \Centerstack{\acceptt \\ create $y_1$; inactivate $x_1$} & \Centerstack{\abortt \\ unlock $x_2$} & \Centerstack{- \\ create $y_3$} \\
\midrule
3 & & $\rhd$\preabortt & \Centerstack{\abortt \\ unlock $x_1$} & \Centerstack{\acceptt \\ create $y_2$; inactivate $x_2$} & \Centerstack{- \\ create $y_3$} \\
\midrule
4 & $\rhd$\preabortt & $\rhd$\preabortt & \Centerstack{\abortt \\ unlock $x_1$} & \Centerstack{\abortt \\ unlock $x_2$} & - \\
\noalign{\smallskip}
\toprule

\rowcolor{verylightgray}
5 & \Centerstack{\preabortt \\ -} & \Centerstack{\preacceptt \\ lock $x_2$} & \Centerstack{\abortt \\ -} & \Centerstack{\abortt \\ unlock $x_2$} & - \\
\midrule
6 & $\rhd$\preacceptt & & \Centerstack{\abortt \\ -} & \Centerstack{\acceptt \\ create $y_2$; inactivate $x_2$} &  \Centerstack{- \\ create $y_3$} \\
\noalign{\smallskip}
\toprule

\rowcolor{verylightgray}
7 & \Centerstack{\preacceptt \\ lock $x_1$} & \Centerstack{\preabortt \\ -} & \Centerstack{\abortt \\ unlock $x_1$} & \Centerstack{\abortt \\ -} & - \\
\midrule
8 & & $\rhd$ \preacceptt& \Centerstack{\acceptt \\ create $y_1$; inactivate $x_1$} & \Centerstack{\abortt \\ -} & \Centerstack{- \\ create $y_3$} \\
\noalign{\smallskip}
\toprule

\rowcolor{verylightgray}
9 & \Centerstack{\preabortt \\ -} & \Centerstack{\preabortt \\ -} & \Centerstack{\abortt \\ -} & \Centerstack{\abortt \\ -} & - \\
\bottomrule
\end{tabular}
\caption{\footnotesize List of replay attacks against the first phase of \sbac for all possible executions of the transaction $T(x_1,x_2)\rightarrow (y_1,y_2,y_3)$ as described in \Cref{attack-overview}. 
The highlighted rows indicate correct executions of \sbac (\ie, without the \attacker), and the other rows indicate incorrect executions due to the replay attacks.
 In multirows, the top sub-rows show the protocol messages emitted by shards, and the bottom sub-rows indicate local shard actions as a result of emitting those messages.
 For example, (\mycolumn{3}, \myrow{2}) means that \shard{1} emits \acceptt (top sub-row), and creates a new object $y_1$ and inactivates $x_1$ (bottom sub-row).  
The first two columns indicate the messages emitted by each shard at the end the first phase of \sbac.
The \attacker races shards at the end of the first phase of \sbac by replaying \prerecorded messages, marked with the symbol $\rhd$ in the first two columns of \Cref{tab:sbac-attack}.
For example $\rhd$\preabortt at (\mycolumn{1}, \myrow{2}) means that the \attacker sends to other relevant shards (in this case \shard{2}) a \prerecorded \preabortt\xspace message impersonating \shard{1} that races the original \preacceptt (\mycolumn{1}, \myrow{1}) emitted by \shard{1}. 
The last three columns indicate the messages emitted at the end of the second phase of \sbac.}
\label{tab:sbac-attack}
\end{table*}

\begin{figure}[t]
\centering
\includegraphics[width=.45\textwidth]{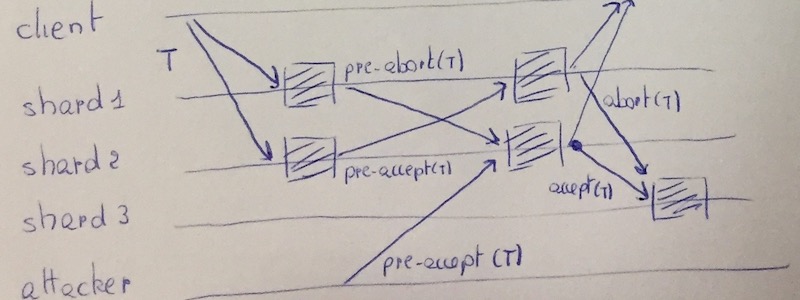}
\caption{\footnotesize 
Illustration of the replay attack depicted in row \textsf{6} of \Cref{tab:sbac-attack}. The \attacker replays to \shard{2} a \prerecorded \preacceptt message (shown as a bold line) from \shard{1}, which precludes \shard{1}'s \preabortt message (shown as a dotted line). 
}
\label{fig:sbac-attack-example-2}
\end{figure}

We present replay attacks on the first phase of \sbac by taking the example of a transaction $T(x_1,x_2)\rightarrow (y_1,y_2,y_3)$ as described in \Cref{attack-overview}. These attacks easily generalize to transactions with $k$ inputs and $k'$ outputs managed by an arbitrary number of shards. The replay attacks work in two steps; \first the \attacker records \preacceptt or \preabortt messages (as described in \Cref{chainspace-message-recording} and \Cref{sec:prerecoring-chainspace}); and \second then replays those messages during a new instance of the protocol.

\Cref{tab:sbac-attack} shows the replay attacks that the \attacker can launch, for all possible combinations of messages emitted by \shard{1} and \shard{2} in the first phase of \sbac. The caption includes details about how to interpret this table. All attacks exploit the parallel composition of multiple \sbac instances, and insufficient binding of messages to its \sbac instance.
We describe \myrow{6} of \Cref{tab:sbac-attack}, to help readers interpret rest of the table on their own.
In the correct execution (\myrow{5}), \shard{1} and \shard{2} emit \preabortt (because $x_1$ is not active) and \preacceptt in the first phase, respectively. In the second phase, both shards emit \abortt and the protocol terminates.   
\Cref{fig:sbac-attack-example-2} illustrates the replay attack corresponding to row \textsf{6} of \Cref{tab:sbac-attack}.
The attacker races \shard{1} by sending to \shard{2} the \prerecorded \preacceptt message from \shard{1}. As a result, \shard{2} emits \acceptt, inactivates object $x_2$ and creates object $y_2$. This leads to inconsistent state across the shards. In a correct execution: \first if $T$ is accepted all its inputs ($x_1$ and $x_2$) should become inactive, and all the outputs ($y_1$, $y_2$, $y_3$) should be created; and \second if $T$ is aborted, all its inputs ($x_1$ and $x_2$) should become active again, and none of the outputs ($y_1$, $y_2$, $y_3$) should be created. However, here we have an incorrect termination of \sbac: at the end of the protocol $x_1$ could be active and $x_2$ is inactive; $y_1$ is not created, $y_2$ and $y_3$ are created.

\Cref{tab:sbac-attack} shows that through careful selection of the messages to replay from different \sbac instances, the attacks can be effective against any shard.
All the attacks (except \myrow{4}) compromise consistency; the \attacker can trick the input shards to inactivate arbitrary objects, and trick the output shards into creating new objects in violation of the protocol. 
The attack depicted in \myrow{4} only affects availability.

\subsection{Attacks on the Second Phase of \sbac} 
\label{chainspace-attack-2}

\begin{table*}[t]
\centering
\footnotesize
\begin{tabular}{c C C C}
\toprule
\multicolumn{4}{c}{\small Phase 2 of \sbac}\\
& \bf Shard 1 & \bf Shard 2 & \shortstack{\textbf{Shard 3} \\ (potential victim)}\\
\midrule

\rowcolor{verylightgray}
1 & \Centerstack{\acceptt \\ create $y_1$; inactivate $x_1$} & \Centerstack{\acceptt \\ create $y_2$; inactivate $x_2$} & \Centerstack{- \\ create $y_3$} \\
2 & $\rhd$\acceptt & & create $y_3$ \\
3 & & $\rhd$\acceptt & create $y_3$ \\
4 & $\rhd$\acceptt & $\rhd$\acceptt & create $y_3$ \\
\midrule
\rowcolor{verylightgray}
5 & \Centerstack{\abortt \\ (unlock $x_1$)} & \Centerstack{\abortt \\ (unlock $x_2$)} & \Centerstack{- \\ -} \\
6 & $\rhd$\acceptt & & create $y_3$ \\
7 & & $\rhd$\acceptt & create $y_3$ \\
8 & $\rhd$\acceptt & $\rhd$\acceptt & create $y_3$ \\

\bottomrule
\end{tabular}
\caption{\footnotesize  List of replay attacks against the second phase of \sbac for all possible executions of the transaction $T(x_1,x_2)\rightarrow (y_1,y_2,y_3)$ as described in \Cref{attack-overview}. 
The highlighted rows indicate correct executions of \sbac (\ie, without the \attacker), and the other rows indicate incorrect executions due to the replay attacks.
In multirows, the top sub-rows show the protocol messages emitted by shards, and the bottom sub-rows indicate local shard actions as a result of emitting those messages.
 For example, (\mycolumn{1}, \myrow{1}) means that \shard{1} emits \acceptt (top sub-row), and creates a new object $y_1$ and inactivates $x_1$ (bottom sub-row).
The first two columns indicate the messages emitted by each shard at the end the second phase of \sbac, and the last column shows the effect of these messages on the output \shard{3}.
Replayed messages are marked with the symbol $\rhd$.
For example $\rhd$\acceptt at (\mycolumn{1}, \myrow{2}) means that the \attacker sends to other relevant shards (in this case \shard{3}) a \prerecorded \acceptt message impersonating \shard{1}.} 
\label{tab:sbac-attack:2}
\end{table*}

We present replay attacks on the second phase of \sbac. The \attacker \prerecords \acceptt messages as described in \Cref{chainspace-message-recording} and \Cref{sec:prerecoring-chainspace}. 
\Cref{tab:sbac-attack:2} shows replay attacks for all possible combinations of messages emitted by \shard{1} and \shard{2} in the second phase. Since the attacks we describe in this section assume that the first phase of \sbac concluded correctly (\ie, all the relevant shards unanimously decide to accept or reject a transaction), both the shards generate \abortt (\myrow{1}) or \acceptt (\myrow{5}).
The caption includes details about how to interpret this table.
We describe \myrow{6} of \Cref{tab:sbac-attack:2}, to help readers interpret rest of the table on their own.
In the correct execution (\myrow{5}), both shards emit \abortt and no output objects are created.
 In the attack in \myrow{6}, the \attacker replays a \prerecorded \acceptt from \shard{1} to all the relevant shards (in this case \shard{3}).
Upon receiving this message, \shard{3} (incorrectly) creates $y_3$.

The potential victims of replay attacks corresponding to the second phase of \sbac are the shards that \emph{only} act as output shards (\ie, do not simultaneously act as input shards). 
The \attacker can replay \acceptt multiple times tricking \shard{3} into creating $y_3$ multiple times. 
These attacks are possible because shards do not keep records of inactive objects (following the UTXO model) for scalability reasons\footnote{Requiring shards to remember the full history of inactive objects would increase their memory requirements monotonically over time, reaching at some point memory limits preventing further operations. Thus this is a poor mitigation for the attacks presented.}, and because \shard{3} takes part in only the second phase of \sbac.
The \attacker can double-spend $y_3$ repeatedly by replaying a single \prerecorded message multiple times, and spending the object (\ie purging it from \shard{3}'s UTXO) before each replay.

Contrarily to the attacks against the first phase of \sbac (\Cref{chainspace-attack-1}), these attacks do not rely on any racing conditions;  there is no need to race any honest messages.


\subsection{Real-world Impact}
\label{chainspace-discussion}
The real-world impact and attacker incentives to conduct these attacks depends on the nature and implementation of the smart contract handling the target objects. We discuss the impact of these attacks in the context of two common smart contract applications, which are also described in the \chainspace paper~\cite{chainspace}. To take a concrete example, we illustrate the attack depicted in \myrow{3} of \Cref{tab:sbac-attack}, but similar results can be obtained with the other attacks described in \Cref{tab:sbac-attack} and \Cref{tab:sbac-attack:2}.

One of the most common blockchain application is to manage cryptocurrency (or coins) and enable payments for processing transactions, implemented by the \textsf{CSCoin} smart contract in \chainspace.
Lets suppose object $x_1$ (handled by \shard{1}) represents Alice's account, and object $x_2$ (handled by \shard{2}) represents Bob's account. To transfer $v$ coins to Bob, Alice submits a transaction $T(x_1,x_2)\rightarrow(y_1,y_2)$, where $y_1$ and $y_2$ respectively represent the new account objects of Alice and Bob, with updated account balances. By executing the attack described in \myrow{3} of \Cref{tab:sbac-attack}, an \attacker can trick \shard{1} to abort the transaction and unlock $x_1$ (thus reestablishing Alice's account balance as it was prior to the coin transfer), and \shard{2} to accept the transaction and create $y_2$ (thus adding $v$ coins to Bob's account). This attack effectively allows any attacker to double-spend coins on the ledger; and shows how to create $v$ coins out of thin air.

Another common blockchain use case is a platform for decision making (or electronic petitions), implemented by the \textsf{SVote} smart contract in \chainspace. Upon initialization, the \textsf{SVote} contract creates two objects: \first $x_1$ representing the tally's public key, a list of all voters' public keys, and the tally's signature on these; and \second $x_2$ representing a vote object at the initial stage of the election (all candidates having a score of zero) along with a zero-knowledge proof asserting the correctness of the initial stage. To vote, clients submit a transaction $T(x_1,x_2)\rightarrow(y_1,y_2)$, where $y_1$ and $y_2$ are respectively the updated voting list (\ie, the voting list without the client's public key), and the election stage updated with the client's vote. By executing the attack described by \myrow{3} of \Cref{tab:sbac-attack}, an \attacker can trick \shard{1} to abort the transaction and thus not update the voting list, and \shard{2} to accept the transaction and thus update the election stage. This allows clients to vote multiple times during an election while remaining undetected (due to the privacy-preserving properties of the \textsf{SVote} smart contract).

\section{\Clientled Cross-shard Consensus} 
\label{client-led-consensus}

We describe replay attacks on \clientled cross-shard consensus protocols. We illustrate these attacks in the context of \omniledger~\cite{kokoris2017omniledger} (\Cref{omniledger}) to make the discussion concrete. However, we note that these attacks can be generalized to other similar systems. 
We discuss how the \attacker can record shard messages to replay in future attacks (\Cref{omniledger-message-recording}). We describe replay attacks on the first (Section~\ref{omniledger-attack-1}) and second (Section~\ref{omniledger-attack-2}) phase of the protocol. Finally, we discuss the real-world impact of these attacks (\Cref{omniledger-discussion}).

\subsection{\omniledger Overview} 
\label{omniledger}

\begin{figure}[t]
\centering
\includegraphics[width=.45\textwidth]{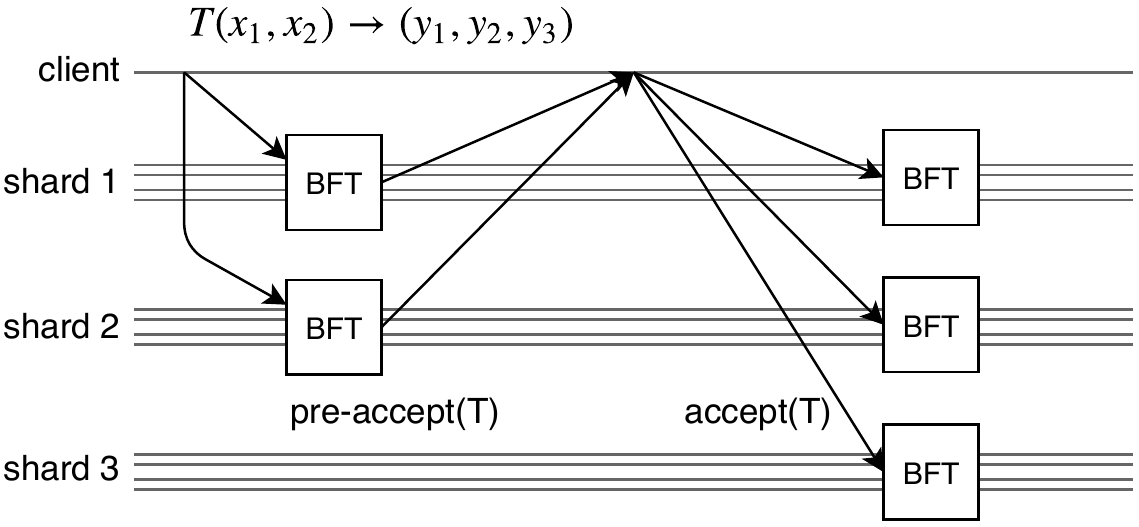}
\caption{\footnotesize 
An example execution of \atomix for a valid transaction $T(x_1,x_2)\rightarrow (y_1,y_2,y_3)$ with two inputs ($x_1$ and $x_2$, both are active) and three outputs $(y_1,y_2,y_3)$, where the final decision is \acceptt.}
\label{fig:atomix}
\end{figure}

\omniledger uses a \clientled cross-shard consensus protocol called \atomix.
The client submits the transaction $T$ to the input shards.
Each shard runs a BFT protocol locally to decide whether to accept or reject the transaction, and communicates its response (\preacceptt or \preabortt) to the client.\footnote{For consistency and clarity, we use the terminology used in \Cref{shard-led-consensus}. In \omniledger, \preacceptt is actually a \emph{proof-of-accept} and \preabortt is a \emph{proof-of-abort}~\cite{kokoris2017omniledger}.}  
A shard emits \preabortt if the transaction fails local checks.
Alternatively, if a shard emits \preacceptt, it inactivates the input objects it manages.
This is the first phase of \atomix, and is similar to the voting phase in the two-phase atomic commit protocol (\Cref{background}), but differs in that the protocol proceeds optimistically.
 The write changes made by the input shards in the first phase of \atomix are considered permanent (\ie, there is no \locked object state), unless the client requests the input shards to revert their changes in the second phase.  
After the client has collected \preacceptt from all input shards, it submits \acceptt message (containing proof of the \preacceptt messages) to the output shards which create the output objects.
Alternatively, if any of the input shards emits \preabortt, the client sends \abortt (containing proof of \preabortt) to the relevant input shards which make the input objects active again.
This is the second phase of \atomix, and is similar to the commit phase in the two-phase atomic commit protocol (\Cref{background}).

\Cref{fig:atomix} shows the execution of \atomix for a valid transaction $T(x_1,x_2)\rightarrow (y_1,y_2,y_3)$, with two active inputs ($x_1$ managed by \shard{1}, and $x_2$ managed by \shard{2}) and producing three outputs $(y_1,y_2,y_3)$ managed by \shard{1}, \shard{2} and \shard{3}, respectively. The client sends $T$ to the input shards, both of which reply with \preacceptt and make the input objects $x_1$ and $x_2$ inactive. The client sends \acceptt to the output shards which respectively create objects $y_1$, $y_2$, and $y_3$.



%
\begin{table*}[!ht]
\centering
\footnotesize
\begin{tabular}{c c c  c  c c c}
\toprule
\multicolumn{3}{c}{\small Phase 1 of \atomix} & & \multicolumn{3}{c}{\small Phase 2 of \atomix}\\
\noalign{\smallskip}
&\Centerstack{\textbf{Shard 1} \\ (potential victim)} & \Centerstack{\textbf{Shard 2} \\ (potential victim)} &
\Centerstack{\textbf{Client} \\ (victim)} &
\Centerstack{\textbf{Shard 1} \\ (potential victim)} & \Centerstack{\textbf{Shard 2} \\ (potential victim)} & \Centerstack{\textbf{Shard 3} \\ (potential victim)}\\
\noalign{\smallskip}
\toprule

\rowcolor{verylightgray}
1 & \Centerstack{\preacceptt \\ inactivate $x_1$} & \Centerstack{\preacceptt \\ inactivate $x_2$} & \acceptt & \Centerstack{- \\ create $y_1$} & \Centerstack{- \\ create $y_2$} & \Centerstack{- \\ create $y_3$} \\
\midrule
2 & $\rhd$ \preabortt & & \abortt & \Centerstack{- \\ re-activate $x_1$} & \Centerstack{- \\ re-activate $x_2$} & - \\
\midrule
3 & & $\rhd$ \preabortt & \abortt & \Centerstack{- \\ re-activate $x_1$} & \Centerstack{- \\ re-activate $x_2$} & - \\
\midrule
4 & $\rhd$\preabortt & $\rhd$\preabortt & \abortt & \Centerstack{- \\ re-activate $x_1$} & \Centerstack{- \\ re-activate $x_2$} & - \\
\noalign{\smallskip}
\toprule

\rowcolor{verylightgray}
5 & \Centerstack{\preabortt \\ -} & \Centerstack{\preacceptt \\ inactivate $x_2$} & \abortt & \Centerstack{- \\ -} & \Centerstack{- \\ re-activate $x_2$} & - \\
\midrule
6 & $\rhd$\preacceptt & & \acceptt & \Centerstack{- \\ create $y_1$} & \Centerstack{- \\ create $y_2$} & \Centerstack{- \\ create $y_3$} \\
\noalign{\smallskip}
\toprule

\rowcolor{verylightgray}
7 & \Centerstack{\preacceptt \\ inactivate $x_1$} & \Centerstack{\preabortt \\ -} & \abortt & \Centerstack{- \\ re-activate $x_1$} & \Centerstack{- \\ -} & - \\
\midrule
8 & & $\rhd$ \textsf{pre-accept}($T$) & \acceptt & \Centerstack{- \\ create $y_1$} & \Centerstack{- \\ create $y_2$} & \Centerstack{- \\ create $y_3$} \\
\noalign{\smallskip}
\toprule

\rowcolor{verylightgray}
9 & \Centerstack{\preabortt \\ -} & \Centerstack{\preabortt \\ -} & \abortt & \Centerstack{- \\ -} & \Centerstack{- \\ -} & - \\
\midrule
10 & $\rhd$ \textsf{pre-accept}($T$) & $\rhd$ \textsf{pre-accept}($T$) & \acceptt & \Centerstack{- \\ create $y_1$} & \Centerstack{- \\ create $y_2$} & \Centerstack{- \\ create $y_3$} \\
\bottomrule
\end{tabular}
\caption{\footnotesize List of replay attacks against the first phase of \atomix for all possible executions of a transaction $T(x_1,x_2)\rightarrow (y_1,y_2,y_3)$ as described in \Cref{attack-overview}. 
The highlighted rows indicate correct executions of \atomix (\ie, without the \attacker), and the other rows indicate incorrect executions due to the replay attacks. 
 In multirows, the top sub-rows show the protocol messages emitted by shards, and the bottom sub-rows indicate local shard actions as a result of emitting those messages.
  For example, (\mycolumn{1}, \myrow{1}) means that \shard{1} emits \preacceptt (top sub-row), and inactivates $x_1$ (bottom sub-row). 
The first two columns indicate the messages emitted by each shard at the end the first phase of \atomix. 
Replayed messages are marked with the symbol $\rhd$, for example $\rhd$\preabortt at (\mycolumn{1}, \myrow{2}) means that the \attacker sends to the client a \prerecorded \preabortt message impersonating \shard{1} that races the original \preacceptt (\mycolumn{1}, \myrow{1}) emitted by \shard{1}. 
 The third column indicates the messages sent by the client to the relevant shards, and the last three columns indicate the local actions performed by shards at the end of \atomix.
 }
\label{tab:atomix-attack}
\end{table*}

\subsection{Message Recording} 

\label{omniledger-message-recording}

Before launching the attacks, the \attacker first records the target shard responses. The \attacker can record shard responses in the first phase of \atomix (\ie, \preacceptt or \preabortt), enabling the attacks described in \Cref{omniledger-attack-1}. The \attacker can also record shard responses in the second phase of \atomix (\ie, \acceptt or \abortt), enabling the attacks described in \Cref{omniledger-attack-2}. 
In the general case, the \attacker passively collects the messages to replay, for example by protocol executions on the network, or by downloading the blockchain and selecting the appropriate messages. \Cref{sec:prerecoring-omniledger} shows how the \attacker can act as a client to actively elicit and record the target messages to later use in the replay attacks.


\subsection{Attacks on the First Phase of \atomix} 
\label{omniledger-attack-1}

We present replay attacks on the first phase of \atomix by taking the example of a transaction $T(x_1,x_2)\rightarrow (y_1,y_2,y_3)$ as described in \Cref{attack-overview}. These attacks easily generalize to transactions with $k$ inputs and $k'$ outputs managed by an arbitrary number of shards. The replay attacks work in two steps: \first the \attacker observes the traffic and records \preacceptt or \preabortt messages as described in \Cref{omniledger-message-recording}; and \second then replay those messages.

\Cref{tab:atomix-attack} shows the replay attacks that the \attacker can launch, for all possible combinations of responses generated by \shard{1} and \shard{2} in the first phase of \atomix. 
The caption includes details about how to interpret this table.
We describe \myrow{6} of \Cref{tab:atomix-attack}, to help readers interpret rest of the table on their own.
In the correct execution (\myrow{5}), \shard{1} emits \preabortt, and \shard{2} emits \preacceptt and inactivates the input objects $x_2$. 
Upon receiving these messages, the client sends \abortt to the output shards \shard{1}, \shard{2} and \shard{3}, and \shard{2} re-activates $x_2$; and the protocol terminates. 
In the attack illustrated in \myrow{6} of \Cref{tab:atomix-attack},
the \attacker races \shard{1} by sending to the client the \prerecorded \preacceptt message from \shard{1}. As a result, the client sends \acceptt message to the output shards \shard{1}, \shard{2} and \shard{3}, which respectively create the output objects $y_1$, $y_2$, and $y_3$.
As a result, the system ends up in an inconsistent state because the output objects ($y_1$, $y_2$, $y_3$) have been created, while the input object ($x_1$) was not active---this results in a double-spend of the input object $x_1$.

\Cref{tab:atomix-attack} shows that through careful selection of the messages to replay, the attacks can be effective against any shard. The attacks illustrated in \myrow{2}, \myrow{3}, and \myrow{4} only affect availability, while the other attacks compromise consistency (\ie, the \attacker can trick the input shards to reactivate arbitrary objects, and trick the output shards into creating new objects in violation of the protocol). The potential victims of these attacks include the client (\eg, when the \attacker replays the shard messages to it in the first phase of \atomix) and any input or output shards.

\subsection{Attacks on the Second Phase of \atomix} 
\label{omniledger-attack-2}

\begin{table*}[t]
\centering
\footnotesize
\begin{tabular}{c C C C C}
\toprule
\multicolumn{5}{c}{\small Phase 2 of \atomix}\\
& \bf Client & \Centerstack{\textbf{Shard 1} \\ (potential victim)} & \Centerstack{\textbf{Shard 2} \\ (potential victim)} & \Centerstack{\textbf{Shard 3} \\ (potential victim)}\\
\midrule

\rowcolor{verylightgray}
1 & \acceptt & \Centerstack{- \\ create $y_1$} & \Centerstack{- \\ create $y_2$} & \Centerstack{- \\ create $y_3$} \\
2 &  $\rhd$\abortt & \Centerstack{- \\ re-activate $x_1$} & \Centerstack{- \\ re-activate $x_2$} & - \\
\midrule

\rowcolor{verylightgray}
3 & \abortt  & \Centerstack{- \\ re-activate $x_1$} & \Centerstack{- \\ re-activate $x_2$} & - \\
4 & $\rhd$\acceptt & \Centerstack{- \\ create $y_1$} & \Centerstack{- \\ create $y_2$} & \Centerstack{- \\ create $y_3$} \\
\bottomrule
\end{tabular}
\caption{\footnotesize  List of replay attacks against the second phase of \atomix for all possible executions of the transaction $T(x_1,x_2)\rightarrow (y_1,y_2,y_3)$ as described in \Cref{attack-overview}. 
The highlighted rows indicate correct executions of \atomix (\ie, without the \attacker), and the other rows indicate incorrect executions due to the replay attacks.
In multirows, the top sub-rows show the protocol messages emitted by shards, and the bottom sub-rows indicate local shard actions.
Note that we use the multirow format for consistency reasons; in this table the first column indicates the messages emitted by the client  at the beginning of the second phase of \atomix, and the last two column shows the effect of these messages on the relevant shards.
Replayed messages are marked with the symbol $\rhd$.
For example, $\rhd$\abortt at (\mycolumn{1}, \myrow{2}) means that the \attacker sends a \prerecorded \abortt message to the input shards impersonating the client.}
\label{tab:atomix-attack:2}
\end{table*}

We present replay attacks on the second phase of \atomix. The \attacker \prerecords \acceptt and \abortt messages as described in \Cref{omniledger-message-recording} and \Cref{sec:prerecoring-omniledger}. 

\Cref{tab:atomix-attack:2} shows replay attacks corresponding to the messages emitted by the client in the second phase---\ie, \acceptt in \myrow{1}, or \abortt in \myrow{3}.
The caption includes details about how to interpret this table.
The \abortt message at (\mycolumn{1}, \myrow{2}) means that the \attacker sends a \prerecorded \abortt message to the input shards (\shard{1} and \shard{2}) impersonating the client. Upon receiving this message, \shard{1} and \shard{2} (incorrectly) re-activate $x_1$ and $x_2$, respectively. 
Furthermore, all output shards create the output objects when the correct \acceptt message emitted by the client (\myrow{1}, \mycolumn{1}) reaches them.
This results in inconsistent state, as the output objects are created, but the input objects are not consumed and are reactivated by the \abortt message replayed by the adversary. 
The potential victims of \abortt replay attack are the input shards.

Similarly, \acceptt at (\myrow{4}, \mycolumn{1}) means that the \attacker sends a \prerecorded \acceptt message to the output shards (\shard{1}, \shard{2} and \shard{3}) impersonating the client. Upon receiving this message, the output shards (incorrectly) create $y_1$, $y_2$ and $y_3$. 
Furthermore, the input shards (\shard{1} and \shard{2}) reactivate $x_1$ and $x_2$ upon receiving the correct \abortt message emitted by the client (\myrow{3}, \mycolumn{1}).
This creates inconsistent state: the input objects have not been consumed and have been reactivated by the \abortt message emitted by the client, but the output objects have been created due to the \acceptt message replayed by the \attacker. 
The potential victims of \acceptt replay attack are the output shards.

These attacks are possible because output shards create objects directly upon receiving \acceptt; they do not check if the objects have been previously invalidated because shards do not keep records of inactive objects (per the UTXO model) for scalability reasons.\footnote{Verifying that objects have not been previously invalided implies either keep a forever-growing list of invalidated objects, or download and check the shard's entire blockchain.}
The \attacker can double-spend the output objects repeatedly from a single \prerecorded message by replaying it multiple times, and spending the object (and effectively purging it from the output shards' UTXO) before each replay. 


Similar to the attacks against the second phase of \sbac (\Cref{chainspace-attack-2}), these attacks do not exploit any racing condition and can be mounted by an adversary at a leisurely pace.

\subsection{Real-world Impact}
\label{omniledger-discussion}
Contrarily to \chainspace, \omniledger does not support smart contracts and only handles a cryptocurrency. The attacks described in Sections~\ref{omniledger-attack-1}~and~\ref{omniledger-attack-2} allow an \attacker to: \first double-spend the coins of any user, by reactivating spent coins (\eg, the \attacker may execute the attack depicted by \myrow{2} of \Cref{tab:atomix-attack:2} to re-activate the objects $x_1$ and $x_2$ after the transfer is complete); and \second create coins out of thin air by replaying the message to create coins (\eg, an \attacker may execute the attack depicted by \myrow{4} of \Cref{tab:atomix-attack:2} to create multiple times object $y_3$, by purging it from the UTXO list of \shard{3} prior to each instance of the attack). 

If the \attacker colludes with the client, it can trigger the \prerecorded messages needed for the attacks as described in \Cref{omniledger-message-recording}.
Alternatively, the \attacker can passively observe the network and collect the target messages to replay.
Similar results can be obtained using the attacks described in \Cref{tab:atomix-attack}.
Note that since transaction are recorded on the blockchain, these attacks can be detected retrospectively.  
This can lead to the \attacker being exposed, or the \attacker can inculpate innocent users (the \attacker can replay messages of any user).


\section{Defenses Against Replay Attacks}

\label{defense}

We identify two issues that lead to the replay attacks described in \Cref{shard-led-consensus} and \Cref{client-led-consensus}, and discuss how to fix those:

\begin{itemize}

\item First, the input shards do not have a way to know that particular protocol messages received correspond to a specific instance (or session) of the protocol. This gap in the input shards' knowledge enables an \attacker to replay, mix and match, old messages leading to attacks.
To address this limitation, we associate a session identifier with each transaction, which has to be crafted carefully to not degrade the performance of the protocols significantly---such as, for example, by requiring nodes to store state linearly in the number of past transactions.

\item Second, in some cases the output shards are only involved in the second phase of the protocol, and therefore have no knowledge of the transaction context (to determine freshness) that is available to the input shards.
This limitation can be addressed by ensuring that all shards---input and output---witnesses the entire protocol execution, rather than just a subset of protocol execution phases. 

\end{itemize}

Note that the two mitigation techniques described above must be used together, as part of a single defense strategy against replay attacks. 
 
\section{The \sysname Protocol} 
\label{sec:byzcuit}

We showed that both \sbac (Sections~\ref{chainspace-attack-1}~and~\ref{chainspace-attack-2}) and \atomix (Sections~\ref{omniledger-attack-1}~and~\ref{omniledger-attack-2}) are vulnerable to replay attacks that can compromise system liveness and safety.
\atomix is the simpler protocol of the two, and using the client to coordinate cross-shard communication can reduce the cost to $O(n)$ in the number of shards (by aggregating shard messages).
However, an unresponsive or malicious client can permanently lock input objects by never initiating the second phase of the protocol, requiring additional design considerations (\eg, a new entity that periodically unlocks input objects for transactions on which no progress has been made). 
On the other hand, \sbac runs the protocol among the shards, without relying on client coordination.
 But this comes at the cost of increased cross-shard communication: all input shards communicate with all other input shards, which leads to communication complexity of $O(n^2)$ in the number of input shards.

Motivated by these insights, we present \sysname---a cross-shard atomic commit protocol (based on \sbac) that integrates design features from \atomix---and offers better performance and security against replay attacks. 
\sysname allocates a Transaction Manager (TM) to coordinate cross-shard communication, reducing its cost to $O(n)$ in the happy case\footnote{The communication complexity can be reduced to $O(n)$ in the number of shards by aggregating shard messages as described by \omniledger.}; alternatively \sysname also has a fall-back mode in case the TM fails, similar to \atomix and traditional two phase commit protocols. 

\sysname achieves resilience against the replay attacks described in \Cref{shard-led-consensus} and \Cref{client-led-consensus}, by leveraging the defense proposed in \Cref{defense}.


\subsection{\sysname Protocol Design} 
We describe how \sysname integrates the defense presented in \Cref{client-led-consensus}. To map particular protocol messages to a specific protocol instance (or session), \sysname associates a session identifier with each transaction. To ensure that all the relevant (input and output) shards witness all phases of the protocol execution, \sysname leverages the notion of \emph{dummy objects}:
each shard creates a fixed number of dummy objects upon configuration;
if a shard only serves as an output shard for a transaction (and therefore will only be involved in the second phase of the protocol), \sysname forces it to be involved in the first phase of the protocol by implicitly including a dummy object managed by the output shard in the transaction inputs, which will create a new dummy object upon completion.
As a result, the output shard also becomes an input shard (because of the inclusion of its dummy object in the transaction inputs) and witnesses the entire protocol execution, rather than just the second phase.

\para{\sysname Protocol Execution} We illustrate \sysname taking the example of a transaction $T(x_1,x_2)\rightarrow (y_1,y_2,y_3)$ with two input objects, $x_1$ managed by \shard{1} and $x_2$ managed by \shard{2}; and three outputs, $y_1$ managed by \shard{1}, $y_2$ managed by \shard{2}, and $y_3$ managed by \shard{3}.

\Cref{fig:sysname} illustrates the \sysname protocol; the client first sends the transaction to all input and output shards.
Note that this is different than other protocols like \sbac and \atomix, where the transaction is only sent to the input shards.
As mentioned previously, to achieve resilience against replay attacks, \sysname forces a shard that is \emph{only} involved in creating the output objects to also become an input shard (and witness the transnational context by participating in the first phase of the protocol) by implicitly consuming one of its dummy inputs (which creates a new dummy object upon completion). \sysname associates a sequence number $s_{x_i}$ to each object and dummy object (when the object is created $s_{x_i}=0$). The sequence number is intrinsically linked to the object: when clients query shards to obtain an object $x_i$, they also receive the associated sequence number $s_{x_i}$. 

When submitting the transaction $T$, the client also sends along a transaction sequence number $s_{T}=max\{ s_{x_1}, s_{x_2}, s_{d_3} \}$,
where the transaction sequence number $s_{T}$ is the maximum of the sequence numbers $s_{x_i}$ of each input object $x_i$ and dummy objects $d_i$~(\ding{202}). 

Upon receiving a new pair $(T,s_T)$, each shard saves $(T,s_T)$ in a local cache memory---the transaction sequence number $s_T$ acts as session identifier associated with the transaction $T$. Each shard internally verifies that the transaction passes local checks, and that $s_T$ is equal to (or bigger than) the sequence numbers of the objects they manage (\ie, \shard{1} checks $s_T \geq s_{x_1}$, \shard{2} checks $s_T \geq s_{x_2}$, \shard{3} checks $s_T \geq s_{d_3}$). The shards send their local decision to the TM: \preacceptts for local accept (and the shard locks the objects it manages),
or \preabortts for local abort.

After receiving all the messages corresponding to the first phase of \sysname from the concerned shards, the TM sends a suitable message to the shards (\acceptts if all the shards respond with \preacceptts, or \abortts otherwise). 
Upon receiving \acceptts or \abortts from the TM, shards first verify that they previously cached the pair $(T, s_T)$ associated with the message; otherwise they ignore it~(\ding{203}).

The \acceptts or \abortts messages sent by the TM provide enough evidence to the shards to verify whether $s_T$ is correctly computed; \ie shards verify that $s_T$ is at least the maximum of the sequence numbers of each input and dummy object by inspecting the transaction $T$ signed by each shard. If \acceptts has a correct $s_T$, the shards inactivate the input objects and create the output objects $(y_1,y_2,y_3)$, and \shard{3} creates a new dummy object $\widetilde{d}_{3}$;  otherwise, they update the sequence numbers of each input object $(s_{x_1},s_{x_2})$ and dummy object $d_{3}$ to $(s_T+1)$, \ie shards locally update $s_{x_1} \leftarrow (s_T+1)$ and $s_{x_2} \leftarrow (s_T+1)$, and $s_{d_{3}}\leftarrow (s_T+1)$. Shards delete $(T,s_T)$ from their local cache~(\ding{204}).

Since we assume that shards are honest---inline with the threat model of the systems discussed---it suffices if only one shard notifies the client of the protocol outcome; we may set any arbitrary rule to decide which shard notifies the client (\eg, the shard handling the first input object)~(\ding{205}).
\Cref{fig:byzcuit-fsm} shows the finite state machine describing the life cycle of \sysname objects. 

%
\begin{figure}[t]
\centering
\includegraphics[width=.45\textwidth]{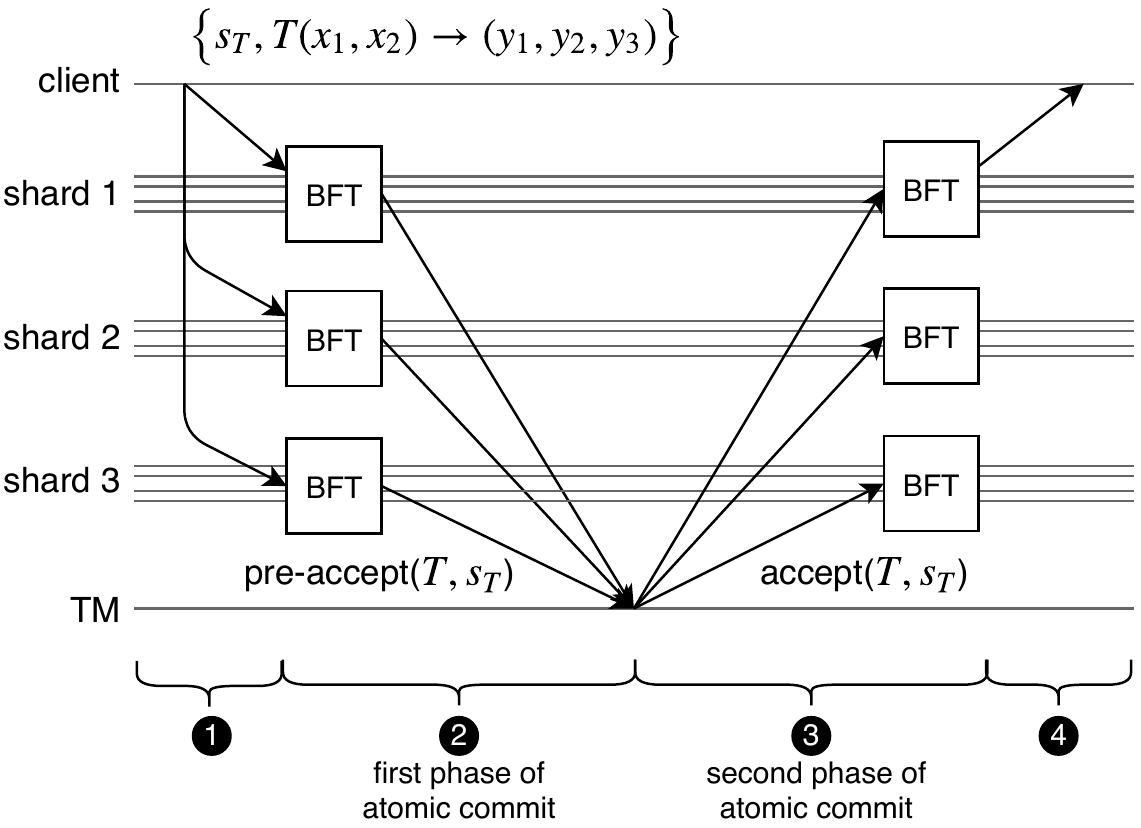}
\caption{\footnotesize An example execution of \sysname for a valid transaction $T(x_1,x_2)\rightarrow (y_1,y_2,y_3)$ with two input objects ($x_1$ and $x_2$, both are active), and three outputs $(y_1,y_2,y_3)$, where the final decision is \acceptts.}
\label{fig:sysname}
\end{figure}

\para{Transaction Manager} 
The Transaction Manager (TM) coordinates cross-shard communication in \sysname. We now discuss who might play the role of the TM, and argue that \sysname guarantees liveness even if the TM is faulty (byzantine) or crashes.

Keeping with the overall design goal of decentralization, we envision that a designated shard will act as the TM. If the shard is honest, the TM is live---and therefore progress is always made. The input shards contact in turn each node of the TM shard until they reach one honest node. The TM shard may have up to $f$ dishonest nodes; therefore, the client or the input shards need to send messages to at least $f+1$ nodes of the TM shard to ensure that it is received by at least one honest node\footnote{Clients may take a statistical view of availability. Given that fewer than $2/3$ of nodes in a shard are dishonest, sending the transaction to $\rho$ nodes will fail to reach an honest node with probability only $(1/3)^{\rho}$. Clients may send messages sequentially to nodes, and only continue if they do not observe progress within some timeout to further reduce costs.}. Thus, as soon as the first honest node receives the message, the protocol progresses.

If the TM is the client or any centralized party, it may act arbitrarily---but this does not stall the protocol because anyone can make the protocol progress by taking over at any time the role of the TM. This is possible because the TM does not act on the basis of any secrets, therefore anyone else can take over and complete the protocols. This ``anyone'' may be an honest node in a shard that wants to finally unlock an object (\eg, upon a timeout); or other clients that wish to use a locked object; or it may be an external service that has a job to periodically close open \sysname instances. \sysname ensures such parties may attempt to make progress asynchronously and concurrently safely.
As a result, \sysname guarantees liveness as long as there is at least one honest entity in the system, willing to act as TM and drive the protocol.

\begin{figure}[t]
\centering
\includegraphics[width=.48\textwidth]{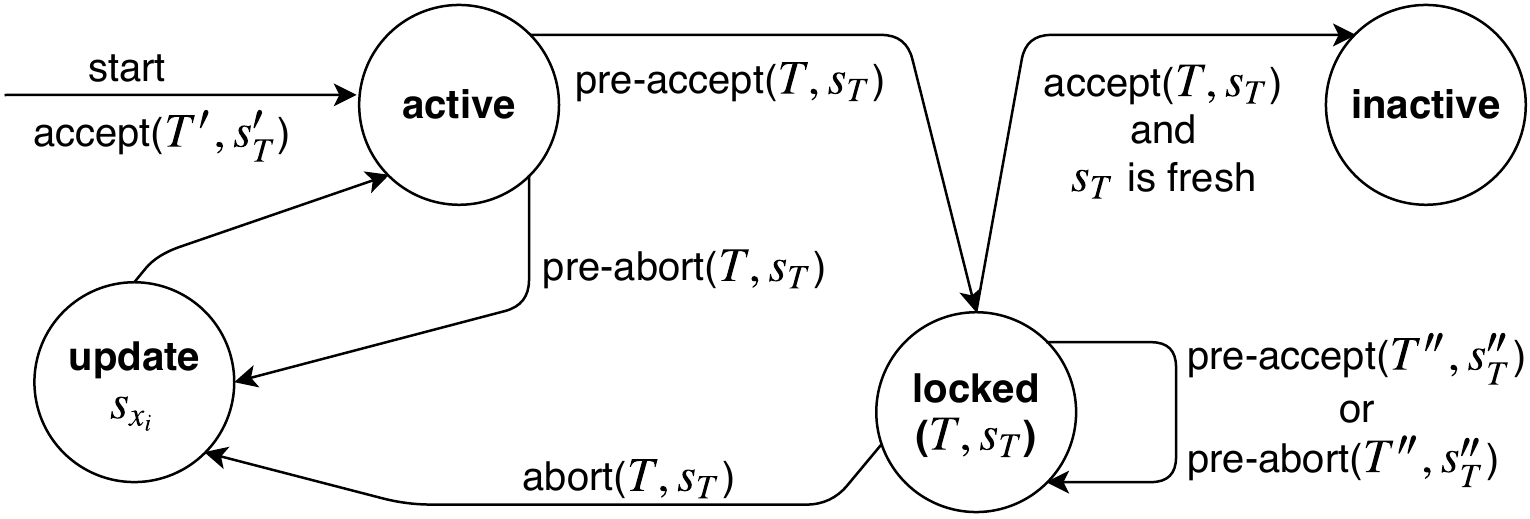}
\caption{\footnotesize State machine representing the life cycle of objects in \sysname. Objects are initially \activeObj. Upon receiving a transaction that passes local checks, a shard changes its input objects' state to \locked  (objects are locked for a given transaction $T$ and transaction sequence number $s_T$) and emits \preacceptts; otherwise it updates the sequence number of every object it manages and emits \abortts. 
Once a shard locks input objects for a given $(T,s_T)$, any \acceptts and \abortts with malformed transaction sequence numbers are ignored, and do not cause any transition (not included in the figure). Any incoming transaction $T'$ that requires processing \locked input object(s) is aborted.
Upon receiving \acceptts with a well formed $s_T$, a shard makes its input objects \inactiveObj and creates the output objects.
Alternatively, upon receiving \abortts shards unlock their inputs and updates the corresponding sequence numbers.
}
\label{fig:byzcuit-fsm}
\end{figure}

\para{Handling Sequence Number Overflow}
An \attacker can try to exhaust the possible sequence numbers to make them overflow. The \attacker submits a pair $(T,s_T)$ such that the sequence number $s_T$ is just below the system overflow value; the sequence numbers associate with the inputs overflow upon the next updates, and the system would be again prone to the attacks described in \Cref{chainspace-attack-1}\footnote{Note that this overflow vulnerability is common to every system relying on nonces chosen by the users, like Byzantine Quorum Systems~\cite{malkhi1998byzantine}.}. To mitigate this issue, shards define a \emph{clone} procedure allowing to update any of their objects to an unchanged version of themselves (\ie it creates a fresh copy of the object). This clone procedure effectively creates a new object with serial number $s_x'=0$. When shards detect that the serial number of one of their objects approaches the overflow value, they execute internally this clone procedure. The \attacker may exploit this mechanism to DoS the system, forcing shards to constantly update their objects; as a result, the target objects are not available to users. DoS countermeasures are out of scope, and are typically addressed by introducing transaction fees.

\subsection{Security against Replay Attacks}\label{sec:security}
We argue that \sysname is resilient to replay attacks. We recall the Honest Shard assumption from \chainspace and \omniledger under which \sysname operates, and assume that messages are authenticated as in traditional BFT protocols. 

\begin{SecAssumption}\label{def:honest-shard}(Honest Shard~\cite{chainspace})
The adversary may create arbitrary smart contracts, and input arbitrary transactions into \sysname, however they are bound to only control up to $f$ faulty nodes in any shard. As a result, and to ensure the correctness and liveness properties of Byzantine consensus, each shard must have a size of at least $3f + 1$ nodes. (From \chainspace~\cite{chainspace}.)
\end{SecAssumption}

Any message emitted by shards comes with at least $f+1$ signatures from nodes. Assuming honest shards, the \attacker can forge at most $f$ signatures, which is not enough to impersonate a shard. We use the Lemma below to prove the security of \sysname.

\begin{lemma} \label{le:secure-phase-1}
Under Honest Shard assumption, no \attacker can obtain \prerecorded messages containing a fresh transaction sequence number $s_T$.
\end{lemma}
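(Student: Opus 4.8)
The plan is to combine two ingredients: the unforgeability of shard messages under the Honest Shard assumption, and the monotonicity of object sequence numbers across the lifecycle of a transaction in \sysname. First I would pin down the meaning of \emph{fresh}. A \prerecorded message carries a sequence number $s_T$; I call it fresh if, for every object $x_i$ it references, the shard managing $x_i$ still stores a sequence number $s_{x_i} \le s_T$ and $x_i$ is still \activeObj---equivalently, if replaying the message now would still pass the shard's local check $s_T \ge s_{x_i}$ on a usable object.

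The first step is the unforgeability argument, lifted directly from the discussion preceding the lemma: every message emitted by a shard carries at least $f+1$ node signatures, while under \Cref{def:honest-shard} the \attacker controls at most $f$ nodes in any shard. Hence the \attacker cannot synthesise a valid shard message on its own; any message it holds must have been genuinely emitted by some shard during some instance of the protocol---that is, a \prerecorded message bound to a concrete pair $(T,s_T)$.

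The second, and central, step is a monotonicity invariant on sequence numbers. A shard signs a message carrying $s_T$ only inside the instance for $(T,s_T)$, after caching $(T,s_T)$ and (for \preacceptts) locking the objects it manages. I would then track what happens when that instance resolves, using the \sysname lifecycle: on \acceptts the input objects become \inactiveObj and the outputs are created afresh with sequence number $0$; on \abortts every referenced sequence number is advanced to $s_T+1$. In both cases the object state strictly moves past $s_T$---either the object no longer exists in active form, or its stored sequence number exceeds $s_T$---so the recorded $s_T$ ceases to be fresh by the definition above. Since the \attacker's messages are all \prerecorded from resolved or resolving instances, the claim reduces to showing none of them can carry a fresh $s_T$.

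The delicate part is the in-flight window: between the moment a shard signs a message for $(T,s_T)$ and the moment the instance resolves, the numeric check $s_T \ge s_{x_i}$ may still hold, so the recorded $s_T$ is not yet strictly stale. I expect this to be the main obstacle, and I would close it by appealing to the locking discipline together with the \sysname design choice that \emph{every} relevant shard---including output shards, via their dummy objects---participates in the first phase. During this window the referenced objects are \locked, so by the \sysname state machine any concurrent transaction touching them is aborted, and the cached $(T,s_T)$ prevents the same message from being re-admitted within the instance; hence a \locked object is not freshly usable. Once it unlocks, its sequence number has already been incremented past $s_T$. Combining this with the monotonicity invariant, every \prerecorded message the \attacker can possibly obtain carries a stale $s_T$, which establishes the lemma.
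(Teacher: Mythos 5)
Your proof is correct and takes essentially the same route as the paper's: unforgeability of shard messages (at least $f+1$ signatures versus at most $f$ corrupt nodes) combined with the monotone advancement of object sequence numbers---bumped past $s_T$ on abort, objects inactivated (or created with counter $0$) on accept---so every message the \attacker can record is necessarily stale. Your explicit closing of the in-flight window via the locking discipline and the cached $(T,s_T)$ is a refinement that the paper's proof leaves implicit, but it does not alter the underlying argument.
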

\begin{proof}
The core idea protecting \sysname from these replay attacks is that the \attacker can only obtain \prerecorded messages associated with old transaction sequence numbers $s_T$. The transaction sequence number $s_T$ is fresh only if it is at least equal the maximum of the sequence number of all input and dummy objects of the transaction $T$. Shards update every input and dummy object sequence number upon aborting transactions in such a way that sequence numbers only increase. That is, after emitting \preacceptts or \preabortts, either the sequence number of all input and dummy objects of $T$ are updated to a value bigger than $s_T$ (in case of \preabortts), or the objects are inactivated which prevents any successive transaction to use them as input (in case of \preacceptts).
It is therefore impossible for the adversary to hold a \prerecorded message for a fresh $s_T$ since the only \prerecorded messages that the adversary can obtain contain sequence numbers smaller than $s_T$.
\end{proof}

\para{Security of the first phase of \sysname}
An \attacker may try to replay \preacceptts and \preabortts during the first phase of the protocol, similarly to the attacks described in Sections~\ref{chainspace-attack-1}~and~\ref{omniledger-attack-1}; the TM then aggregates these messages into either \acceptts or \abortts, and forwards them to the shards during the second phase of the protocol. 

\Cref{th:secure-phase-1} shows that \sysname detects that they originate from replayed messages and ignores them. Intuitively, the transaction sequence number $s_T$ acts as a monotonically increasing session identifier associated with the transaction $T$; the \attacker cannot obtain \prerecorded messages containing a fresh $s_T$. \sysname shards can then distinguish replayed messages (\ie, messages with old $s_T$) from the messages of the current instance of the protocol (\ie, messages with fresh $s_T$).

\begin{theorem} \label{th:secure-phase-1}
Under Honest Shard assumption, \sysname ignores \acceptts and \abortts messages issued from replayed \preacceptts and \preabortts.
\end{theorem}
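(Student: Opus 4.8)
The plan is to reduce the theorem to \Cref{le:secure-phase-1} together with the two verification steps that a \sysname shard performs before acting on a second-phase message: the cache check, which requires the shard to have previously stored the \emph{exact} pair $(T,s_T)$ carried by the incoming \acceptts or \abortts, and the well-formedness check, which requires $s_T$ to be at least the maximum of the sequence numbers of all input and dummy objects of $T$ (verifiable from the signed transaction). First I would fix notation, writing $s_T$ for the fresh sequence number of the current protocol instance, i.e.\ the value an honest client computes from the live object sequence numbers. I would then invoke the Honest Shard assumption to record that every shard message carries at least $f+1$ signatures; since the \attacker can forge at most $f$, it cannot alter the $(T,s_T)$ binding inside any signed first-phase message, nor synthesize a shard message from scratch.

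Next I would argue by a case split on the sequence number $s_T'$ appearing in the phase-two message that the TM forwards to the shards. In the first case, $s_T' = s_T$ equals the current instance's fresh value; then by \Cref{le:secure-phase-1} the aggregate could not have been built from \prerecorded first-phase messages, since the \attacker provably cannot obtain a \preacceptts or \preabortts carrying a fresh $s_T$, so the message is genuine and no attack has occurred. In the second case, $s_T' < s_T$ is stale; here I would show that no honest shard holds $(T,s_T')$ in its cache for the live instance. A shard only caches and advances a pair whose sequence number passes the first-phase freshness check $s_T' \ge s_{x}$ against the shard's own object $x$, but \Cref{le:secure-phase-1} guarantees that every relevant input or dummy object has already had its sequence number raised strictly above $s_T'$ (or been inactivated). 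Hence the freshness check fails, the cache lookup misses, and \sysname discards the message, which is exactly the claim.

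The subtle case—and the step I expect to be the main obstacle—is mix-and-match: the \attacker splices a stale replayed message from one shard with genuine fresh messages from the others, hoping the TM assembles a fresh-looking aggregate that passes both checks. I would rule this out by combining message authentication with the binding of $s_T$ inside each signed first-phase message and the consistency enforced by the well-formedness check: each component message commits to its own $s_T$ under that shard's signatures, and the well-formedness check forces a single consistent sequence number across the whole aggregate. The \attacker can neither raise the stale component's $s_T'$ nor lower the genuine components' $s_T$ without re-signing, which is impossible under the Honest Shard assumption, so no coherent aggregate carrying a fresh $s_T$ can contain a replayed component. Once this binding-and-consistency argument is established, the remainder follows mechanically from the cache check and \Cref{le:secure-phase-1}, covering both the \acceptts and \abortts cases symmetrically.
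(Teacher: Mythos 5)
Your proposal is correct and rests on the same foundations as the paper's proof---Lemma~\ref{le:secure-phase-1}, unforgeability of shard signatures under the Honest Shard assumption, the cache check against the locked pair $(T,s_T)$, and the fact that an \acceptts or \abortts is an aggregate of signed first-phase messages---but it is organized around a different case split. The paper splits on the \emph{true} first-phase outcome: if $T$ should abort, the \attacker holds at most $n-1$ \preacceptts with fresh $s_T$ and cannot assemble the $n$ required for an \acceptts; if $T$ should be accepted, no \preabortts with the locked (fresh) $s_T$ exists, so no \abortts matching the shards' cached pair can be formed (the paper is careful here that freshness is only \emph{enforced} on accept, and that abort is instead blocked by the cache match). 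You split instead on the sequence number carried by the adversarial second-phase message: stale ones die on the cache/freshness checks, and fresh ones must be entirely genuine, with the splicing (mix-and-match) case excluded because each component is signature-bound to its own $s_T$. Your splicing argument is exactly where the paper's counting argument lives, and the two are interchangeable: the paper's version makes explicit \emph{which} messages exist in each scenario, while yours makes explicit \emph{why} a heterogeneous aggregate is rejected---something the paper leaves implicit in the phrase ``composed of $n$ \preacceptts messages.'' One small imprecision: the cross-component consistency of $s_T$ is not enforced by the well-formedness check ($s_T \geq \max$ of object sequence numbers) but by the requirement that every signed component match the single pair $(T,s_T)$ the shard has cached and locked (\Cref{fig:byzcuit-fsm}); this does not affect the validity of your argument.
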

\begin{proof}
\Cref{fig:byzcuit-fsm} shows that once \sysname locks objects for a particular pair $(T,s_T)$, the protocol can only progress toward \acceptts or \abortts; \ie shards can either accept or abort the transaction $T$. The \attacker aims to trick one or more shards to incorrectly accept or abort $T$ by injecting \prerecorded messages during the first phase of \sysname; we show that the \attacker fails in every scenario. 

Suppose that a transaction $T$ should abort (the TM outputs \abortts), but the \attacker tries to trick some shards to accept the transaction. \Cref{fig:byzcuit-fsm} shows that the \attacker can only succeed the attack if they gather \acceptts containing a fresh transaction sequence number $s_T$. Lemma~\ref{le:secure-phase-1} states that no \attacker can obtain \prerecorded messages over a fresh transaction sequence number $s_T$; therefore the only messages available to the adversary at this point of the protocol are (at most) $n-1$ \preacceptts and (at most) $n$ \abortts, where $n$ is the number of concerned shards. This is not enough to form an \acceptts message with a fresh transaction sequence number $s_T$ (which is composed of $n$ \preacceptts messages); therefore the \attacker cannot trick any shard to accept the transaction.

Suppose that a transaction $T$ should be accepted (the TM outputs \acceptts with a fresh $S_T$), but the \attacker tries to trick some shards to abort the transaction. \Cref{fig:byzcuit-fsm} show that \sysname does not require a fresh transaction sequence number $s_T$ to abort transactions (the freshness of $s_T$ is only enforced upon accepting a transaction); but shards locked the input and dummy objects of the transaction for the pair $(T,s_T)$ (with fresh $s_T$), so the attacker needs to gather \abortts containing the same transaction sequence number $s_T$ locked by shards.  Lemma~\ref{le:secure-phase-1} shows that the \attacker cannot obtain \prerecorded messages over fresh $s_T$; therefore the only messages available to the adversary containing the (fresh) $s_T$ locked by shards at this point of the protocol are $n$ \preacceptts. This is not enough to form an \abortts message (which is composed of at least one \preabortts); therefore the \attacker cannot trick any shard to abort the transaction.
\end{proof}
%


\para{Security of the second phase of \sysname}
An \attacker may try to replay \acceptts and \abortts messages during the second phase of the protocol, similarly to the attacks described in Sections~\ref{chainspace-attack-2}~and~\ref{omniledger-attack-2}.

\Cref{th:secure-phase-2} shows that \sysname ignores those replayed messages. Intuitively, these attacks target shards acting only as output shards (and not also as input shards) and exploit the fact that they are only involved in the second phase of the protocol, and therefore have no knowledge of the transaction context (to determine freshness) that is available to the input shards. \sysname is resilient to these replay attacks as it is designed in such a way that there are no shards that act only as output shards; all output shards are forced to also become input shards, by introducing dummy objects if they do not manage any input objects; this prevents the attacks by removing the attack target.

\begin{theorem} \label{th:secure-phase-2}
Under Honest Shard assumption, \sysname ignores replayed \acceptts and \abortts messages.
\end{theorem}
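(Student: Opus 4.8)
The plan is to reduce the security of the second phase to two independent guarantees: \first that \sysname leaves no shard acting \emph{only} as an output shard, and \second that the cache-and-delete discipline on the pair $(T,s_T)$ lets every shard reject any message whose session identifier it has not just witnessed. I would begin by recalling that the attacks of Sections~\ref{chainspace-attack-2}~and~\ref{omniledger-attack-2} succeed \emph{exclusively} against shards that are solely output shards: such a shard processes a second-phase message without ever having seen the transaction context, so it cannot distinguish a fresh \acceptts from a \prerecorded one. The theorem therefore amounts to showing that \sysname removes this attack surface and equips every relevant shard with the state needed to detect a replay.

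The first key step is structural. By construction, \sysname forces any shard that manages an output object but no genuine input object to consume one of its dummy objects, so it is implicitly an input shard of $T$ and participates in the first phase. Consequently every shard concerned by $T$ caches the pair $(T,s_T)$ at step~(\ding{202}) of the protocol, exactly as a genuine input shard does. This is the point that defeats the original second-phase attacks: no shard reaches the second phase without first-phase state, so in particular the victim \shard{3} of Tables~\ref{tab:sbac-attack:2}~and~\ref{tab:atomix-attack:2} now holds a cache entry for $(T,s_T)$.

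The second key step combines the cache check with freshness. On receiving an \acceptts or \abortts, a shard consults its cache for $(T,s_T)$ and, per step~(\ding{203}), ignores the message if no matching entry is present; after acting on a valid message it deletes the entry at step~(\ding{204}). A \prerecorded second-phase message must carry some previously used $(T,s_T)$, so I would argue it is rejected in each case: if the shard has already consumed that instance, the entry was deleted and step~(\ding{203}) discards the replay; otherwise the replay carries a stale $s_T$, and by Lemma~\ref{le:secure-phase-1} the \attacker can only hold messages with old sequence numbers. Because aborting raises every input and dummy sequence number strictly above $s_T$ while accepting inactivates the objects outright, any live instance of $T$ is cached under a strictly larger sequence number (or its objects no longer exist); hence the stale $s_T$ matches no live cache entry and is again rejected at step~(\ding{203}). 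Authentication closes the remaining gap: under the Honest Shard assumption the \attacker controls at most $f$ nodes per shard, so it cannot forge the $f+1$ signatures needed to mint a \emph{new} message with a fresh $s_T$, leaving replay of stale messages as its only option.

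The hard part will be the output-shard case, precisely the case the whole defense is built to address. I must verify that the dummy-object mechanism genuinely endows the former output-only shard with the same freshness knowledge as an ordinary input shard---that its dummy object's sequence number advances (or its dummy is consumed) in lockstep with the transaction, so that a single \prerecorded \acceptts cannot re-create $y_3$ after that output has already been created and perhaps spent. Establishing that the dummy's sequence-number update is bound to the very pair $(T,s_T)$ under which the output was created, so that no subsequent replay of the same message can slip past step~(\ding{203}), is the crux; the input-shard cases then follow from the same cache discipline already exercised in the proof of \Cref{th:secure-phase-1}.
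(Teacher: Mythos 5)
Your overall strategy is the same as the paper's: dummy objects eliminate output-only shards so that every concerned shard participates in the first phase and caches $(T,s_T)$; the cache-and-delete discipline confines any replayed \acceptts or \abortts to the window in which $T$ is actually live; and the residual live-window case is closed with the freshness machinery of Lemma~\ref{le:secure-phase-1} and \Cref{th:secure-phase-1}. The paper compresses exactly this into three sentences, dispatching the live window by observing that the \attacker is then ``under the same conditions'' as \Cref{th:secure-phase-1}.

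There is, however, a genuine gap in your middle step. You claim that a replayed second-phase message carrying a previously used $(T,s_T)$ is always rejected at step~(\ding{203}), because either the instance completed (entry deleted) or ``any live instance of $T$ is cached under a strictly larger sequence number (or its objects no longer exist).'' That dichotomy is incomplete: shards cache \emph{every} pair $(T,s_T)$ they receive, before and independently of local validity checks, and $s_T$ is chosen by the submitter. An \attacker acting as a client can therefore re-submit $T$ with the \emph{same} stale $s_T$ after the original instance was accepted and its inputs consumed; every concerned shard then emits \preabortts, yet holds a cache entry matching $(T,s_T)$, so the subsequently replayed \prerecorded \acceptts passes step~(\ding{203}) and also passes the well-formedness check on $s_T$ (it passed in the original run). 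Rejection in this scenario cannot come from the cache check at all: it comes from the rule in \Cref{fig:byzcuit-fsm} that a shard acts on \acceptts only for input (and dummy) objects it currently holds \locked under that very pair---which is impossible here, since a \prerecorded \acceptts for $(T,s_T)$ can only exist if a prior run already inactivated those inputs, and a \prerecorded \abortts only if a prior run raised every relevant sequence number above $s_T$. This locking argument is precisely the reduction to the conditions of \Cref{th:secure-phase-1} that the paper performs, and it is the part you defer: you yourself flag establishing that the dummy's state update is bound to the pair $(T,s_T)$ as ``the crux,'' but never establish it, and it is exactly the case in which your cache-mismatch argument fails.
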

\begin{proof}
\Cref{fig:byzcuit-fsm} shows that shards only act upon \acceptts and \abortts messages if they have the pair $(T,s_T)$ saved in their local cache\footnote{Contrarily to \sbac and \atomix, all \sysname shards have the pair $(T,s_T)$ in their local cache after as they all participate to the first phase of the protocol.}. Shards save a pair $(T,s_T)$ in their local cache upon emitting \preacceptts or \preabortts, and delete it at the end of the protocol; therefore the only attack windows where the adversary can replay \acceptts and \abortts messages is while the transaction $T$ (associated with $s_T$) is being processed by the second phase of the protocol. This forces the attacker to operates under the same conditions as \Cref{th:secure-phase-1}.
\end{proof}
%

Appendix~\ref{properties} shows that \sysname guarantees liveness, consistency and validity, similarly to \sbac.

\section{Implementation \& Evaluation} \label{sec:implementation}

We implement a prototype of \sysname (Section~\ref{sec:byzcuit}) in Java and evaluate its performance and scalability. To analyze the overhead introduced by our replay attack defenses (\ie, with message sequence numbers and dummy objects), we compare \sysname with replay defenses (\sysnamereplay) with the baseline of \sysname without any replay attack defenses (\sysnamebaseline). 

Our implementation of \sysname is a fork of the \chainspace code~\cite{chainspace}, and is released as an open-source project\footnote{\url{https://github.com/sheharbano/byzcuit}}.
For BFT consensus, we use the \bftsmart~\cite{bftsmart} Java library (based on PBFT~\cite{pbft}), which is one of the very few maintained open source BFT libraries. End users run a client to communicate with \sysname nodes, which sends transactions according to the \bftsmart protocol. The \sysname client also acts as the Transaction Manager (TM) and is responsible for driving the cross-shard consensus.

We evaluate the performance and scalability of our \sysname implementation through deployments on Amazon EC2 containers. 
We also compare \sysname with \chainspace to measure performance improvements, by running our evaluations in a similar setup as \chainspace. 
We launch up to 96 instances for shard nodes and 96 instances for clients on \emph{t2.medium} virtual machines, each containing 8 GB of RAM on 2 virtual CPUs and running GNU/Linux Debian 8.1. We use 4 nodes per shard. 
Each measured data point corresponds to 10 runs represented by error bars. The error bars in \Cref{fig:tpsVSshards} and \Cref{fig:tpsVSdummy} show the average and standard deviation, and the error bars in \Cref{fig:latencyVStps} show the median and the 75th and 25th percentiles.

\begin{figure}[t]
\centering
\includegraphics[width=.45\textwidth]{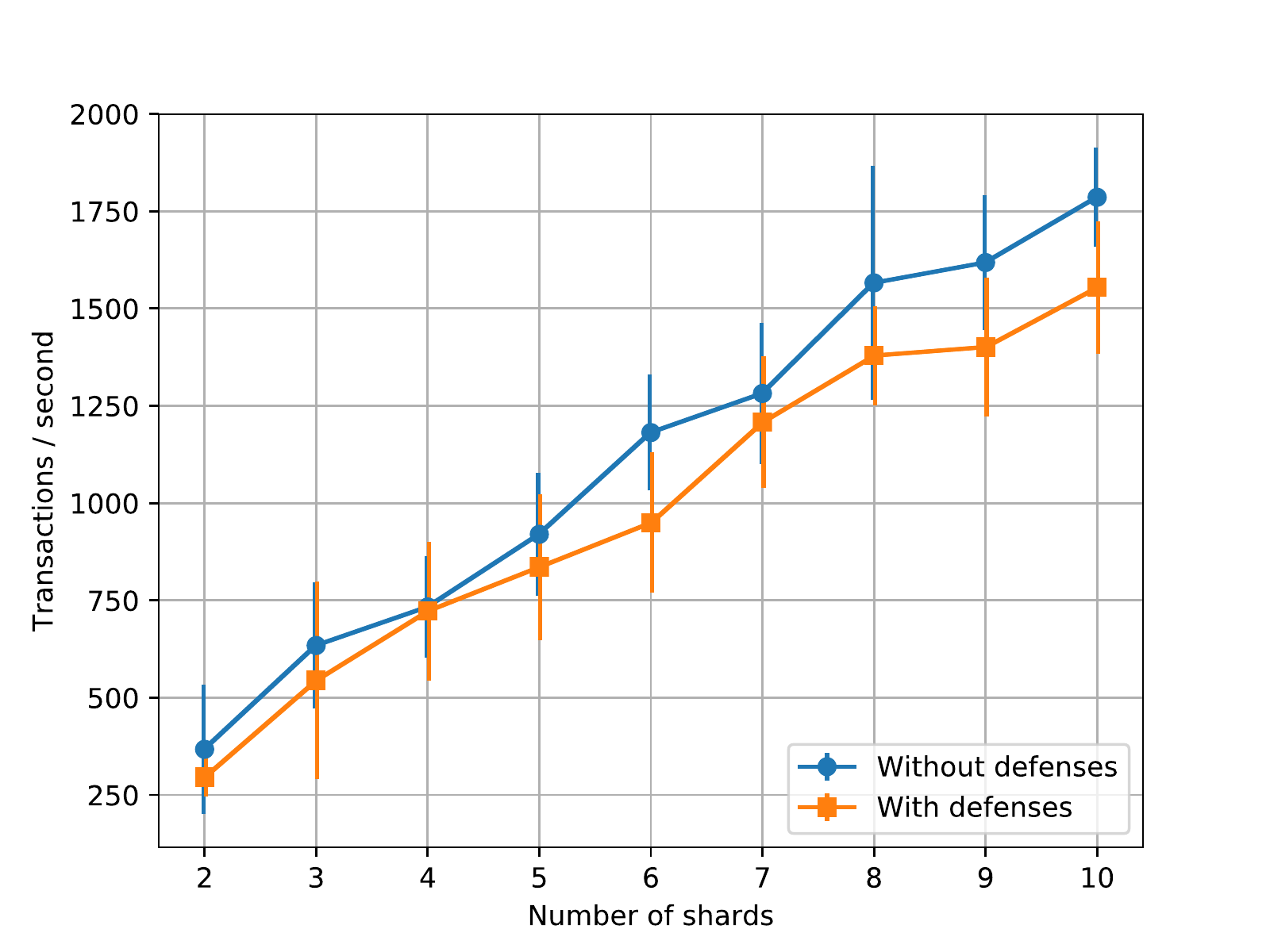}
\caption{\footnotesize The effect of the number of shards on throughput. Each transaction has 2 input objects and 5 output objects, both chosen randomly from shards.}
\label{fig:tpsVSshards}
\end{figure}
\begin{figure}[t]
\centering
\includegraphics[width=.45\textwidth]{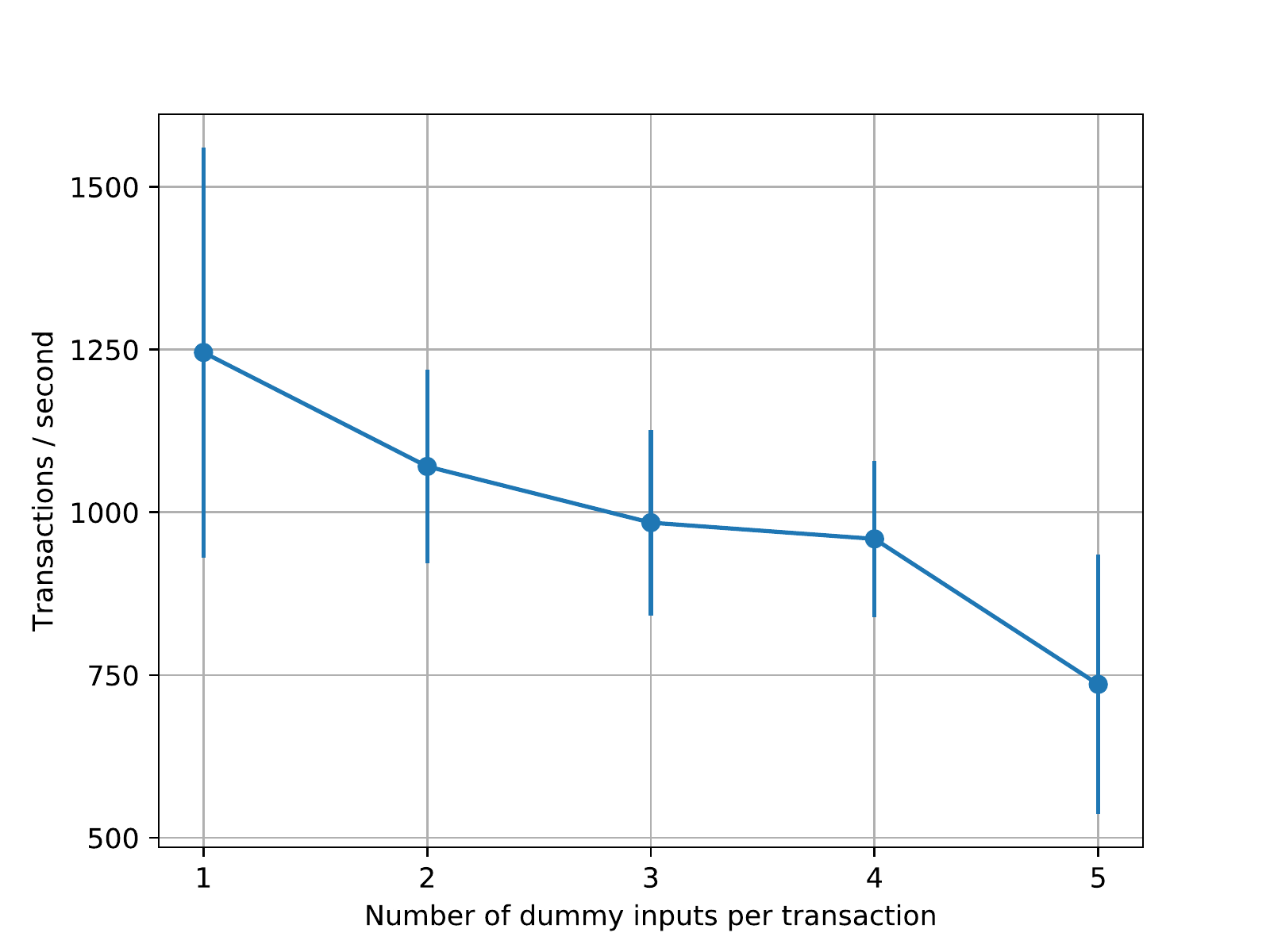}
\caption{\footnotesize Decrease of \sysname throughput with the number of dummy objects. Each transaction has 1 input object, and up to 5 dummy objects randomly selected from unique non-input shards. 6 shards are used.}
\label{fig:tpsVSdummy}
\end{figure}
\begin{figure}[t]
\centering
\includegraphics[width=.45\textwidth]{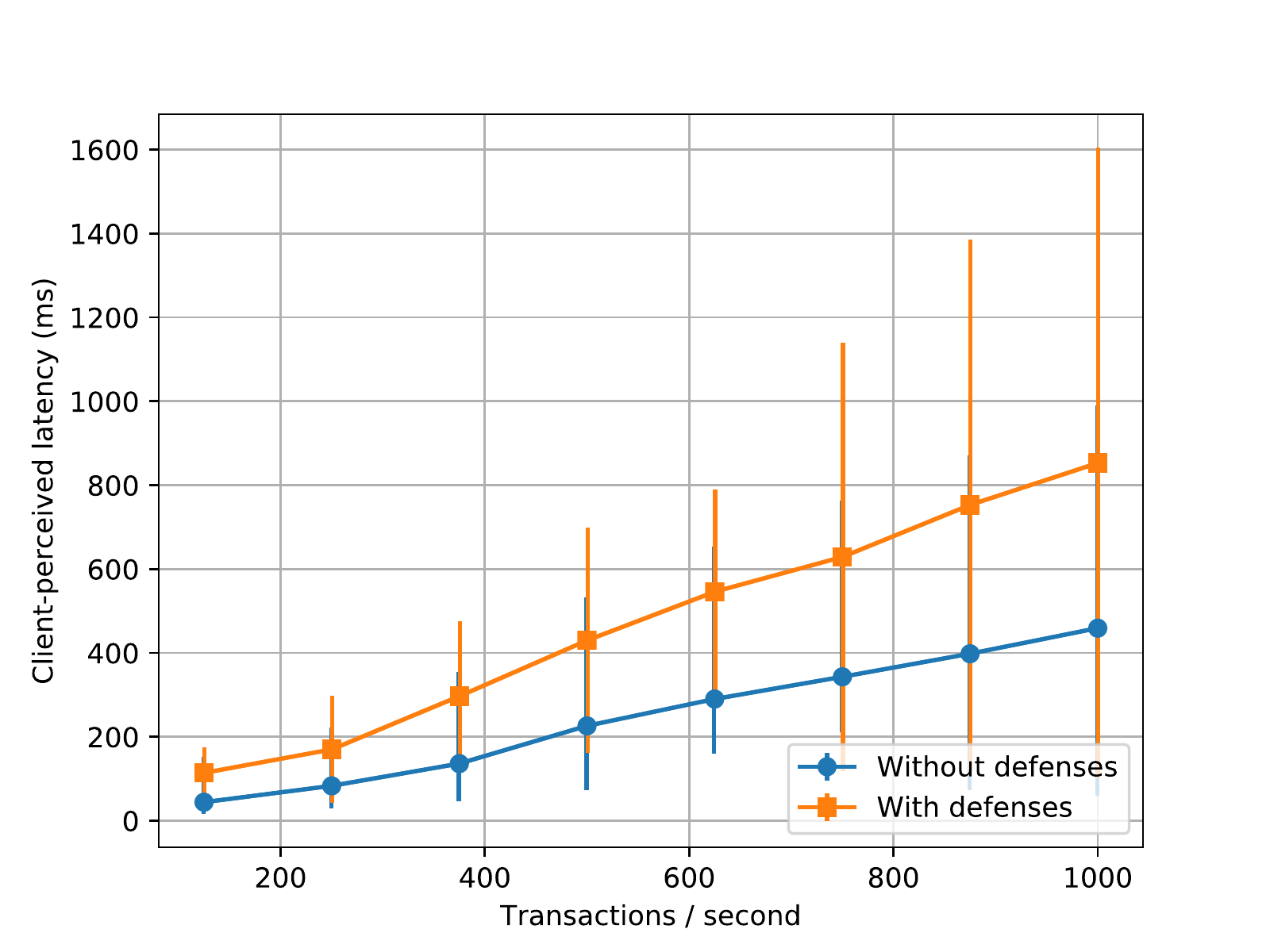}
\caption{\footnotesize Client-perceived latency vs. system load (number of transactions received per second by \sysname), for 6 shards with 2 inputs and 5 outputs per transaction (both chosen randomly from shards).}
\label{fig:latencyVStps}
\end{figure}

\mypara{Throughput and Scalability}
\Cref{fig:tpsVSshards} shows the throughput of \sysname (the number of transactions processed per second, tps) corresponding to an increasing number of shards. 
Each transaction has 2 input objects and 5 output objects, chosen randomly from shards. We test transactions with 5 output objects for a fair evaluation of \sysname's replay defenses by triggering the creation of dummy objects (\ie, a large number of output objects and a small number of input objects implies a higher probability of output-only shards getting selected, triggering the creation of dummy objects). We find that \sysnamereplay has a throughput of 260 tps for 2 shards, and linearly scales with the addition of more shards achieving up to 1550 tps for 10 shards. As expected, the throughput of \sysnamereplay is lower than \sysnamebaseline by a somewhat constant factor ranging from 20--200 tps, but still increases linearly. This is expected because the creation of dummy objects in \sysnamereplay leads to a higher number of shards processing the same transaction compared to \sysnamebaseline, leading to lower concurrency and lower throughput. 

Another interesting observation is that the design and implementation optimizations in \sysnamereplay lead to significantly higher throughput than \chainspace, even though the former has lower concurrency due to the dummy objects. For similar experimental setup and for 2--10 shards, \chainspace achieves 70--180 tps, while \sysnamereplay achieves 260--1550 tps. This is due to the improved design of the cross-shard consensus protocol (Section~\ref{sec:byzcuit}), which results in communication complexity of $O(n)$ in contrast to \chainspace's $O(n^2)$ (where $n$ is the number of input shards). Another reason for \sysnamereplay's significant throughput improvement is that unlike \chainspace, all interactions between the Transaction Manager and the shards are asynchronous. This eliminates the blocking condition in \chainspace where a shard cannot commit a transaction in the second phase of the cross-shard consensus protocol, until it receives messages from all concerned shards corresponding to the first phase.  

\mypara{The Effect of Dummy Objects on Throughput}
We previously observed that dummy objects reduce the throughput of \sysnamereplay with respect to \sysnamebaseline. \Cref{fig:tpsVSdummy} shows the extent of throughput degradation due to dummy objects. We submit specially crafted transactions to 6 shards, such that each transaction has 1 input object, and we vary the number of dummy objects from 1--5 selected from unique shards, resulting in a corresponding decrease in concurrency because as many shards end up processing the transaction. For example, 2 dummy objects means that 3 shards process the transaction (1 input shard, and 2 more shards corresponding to the dummy objects).
As expected, the throughput decreases by 20--250 tps with the addition of each dummy object, and reaches 750 tps when all 6 shards handle all transactions.

\mypara{Client-perceived Latency}
\Cref{fig:latencyVStps} shows the client-perceived latency---the time from when a client submits a transaction, until it receives a decision from \sysname about whether the transaction has been committed---under varying system loads (expressed as transactions submitted to \sysname per second). We submit a total of 1200 transactions at 200--1000 transactions per second to \sysname with 6 shards. Each transaction has 2 inputs objects and 5 output objects, both chosen randomly from shards. When the system is experiencing a load of up to 1000 tps, clients hear back about their transactions in less than a second on average, even with our replay attack defenses.

\section{Comparison with Mutex-based Cross-shard Consensus Protocols} \label{sec:ethereum}
Mutex-based schemes for cross-shard transactions, such as \ethereum's cross-shard ``yanking'' proposal~\cite{yanking}, find a way to avoid complex cross-shard coordination for transactions that involve objects managed by different shards.
The key idea is to require all objects
 that a transaction reads or writes 
 to be in the same shard (\ie, all locks for a transaction are local to the shard). Cross-shard transactions are enabled by transferring the concerned objects between shards, effectively giving shards a lock on those objects. 
 When \shard{1} transfers an object to \shard{2}, \shard{1} includes a transfer ``receipt'' in its blockchain. 
  A client can then send to \shard{2} a Merkle proof of this receipt being included in \shard{1}'s blockchain, which makes the object active in \shard{2}.

Mutex-based schemes also need to consider replay attacks.
Clients can claim the same receipt multiple times, unless shards store information about previously claimed receipts. 
Na\"{\i}vely, shards have to store information about all previously claimed receipts permanently.
However, two intermediate options with trade-offs have been proposed~\cite{yanking}:
\begin{itemize}
    \item Shards only store information about receipts for $l$ blocks; so clients can only claim receipts within $l$ blocks, and objects are permanently lost if not claimed within $l$ blocks.
    \item Shards only store information about receipts for $l$ blocks, and include the root of a Merkle tree of claimed receipts in their blockchain every $l$ blocks. If a receipt is not claimed within $l$ blocks, the client must provide one Merkle proof every $l$ blocks that have passed to show that the receipt has not been previously claimed, in order to claim it. The longer the receipt was not claimed, the greater the number of proofs that are needed to claim a receipt. These proofs need to be also stored on-chain to allow other nodes to validate them.
\end{itemize}
\sysname forgoes the need for shards to store information about old state (such as inactive objects or old receipts) as shards only need to know the set of active objects they manage, and does not impose a trade-off between the amount of information about old state that needs to be stored and the cost of recovering old state that was held up in an incomplete cross-shard transaction (\ie, an unclaimed receipt).
\section{Conclusion} 

\label{sec:conclusion}

We presented the first replay attacks against cross-shard consensus protocols in sharded distributed ledgers. These attacks affect both shard-driven and client-driven consensus protocols, and allow attackers to double-spend or lock objects with minimal efforts. The attacker can act independently without colluding with any nodes, and succeed even if all nodes are honest; most of the attacks work without making any assumptions on the underlying network. While addressing these attacks seems like an implementation detail, their many variants illustrate that a fundamental re-think of cross-shard commit protocols is required to protect against them.

We developed \sysname, a new cross-shard consensus protocol merging features from \shardled and \clientled consensus protocols, and withstanding replay attacks. \sysname can be seen as unifying Atomix (from \omniledger) and \sbac (from \chainspace), into an $O(n)$ protocol, that is efficient and secure. We implemented a prototype of \sysname and evaluated it on a real cloud-based testbed, showing that it is more efficient than \chainspace, and on par with \omniledger performance. The resulting protocol is a drop-in replacement for either, and can be adopted to immunize systems based on those designs.

\section*{Acknowledgements} At the time of this work, George Danezis, Shehar Bano and Alberto Sonnino were supported in part by EPSRC Grant EP/N028104/1 and the EU H2020 DECODE project under grant agreement number 732546 as well as \texttt{chainspace.io}. Mustafa Al-Bassam is supported by The Alan Turing Institute. We thank Eleftherios Kokoris-Kogias for helpful suggestions on early manuscripts.
We appreciate the valuable feedback we
received from our shepherd and the anonymous
reviewers.


{\normalsize \bibliographystyle{acm}
\bibliography{references}}


\appendices
\section{Eliciting Messages to Replay} \label{sec:prerecoring}
This appendix shows how the \attacker can act as (or collude with) a client to actively elicit and record the target messages to later use in the replay attacks. This empowers the \attacker to actively orchestrate the attacks.

We describe how the \attacker can trigger target messages in the context of an example, without loss of generality.
Lets assume that \shard{1} manages objects $x_{1}$ (\activeObj) and object $\widetilde{x_1}$ (\inactiveObj or non-existent), and \shard{2} manages object $x_{2}$ (\activeObj);
$\widetilde{x*}$ means any inactive object on the shard, and $y*$ means any output object (\ie, their details do not matter).

\subsection{\Shardled Cross-Shard Consensus} \label{sec:prerecoring-chainspace}
We show how the \attacker can act as (or collude with) a client to actively elicit and record the target messages, in the context of \shardled cross-shard consensus protocols as illustrated by \Cref{shard-led-consensus}.
To elicit \preacceptt for a transaction $T(x_{1},x_{2})\rightarrow (y*)$ (the output $y*$ is not relevant here) from \shard{1}, the key consideration is to closely precede the transaction with another transaction $T'$ that: \first locks the inputs managed by at least one other shard (in this case $x_2$ on \shard{2}); and \second to ensure that the preceding transaction $T'$ gets ultimately aborted, and $x_2$ becomes active again. The steps look as follows:  


\begin{itemize}
    \item The \attacker submits $T'(x_{2},\widetilde{x*})\rightarrow (y*)$ to \shard{2}. This locks $x_2$.
    \item The attacker quickly follows up by submitting $T(x_{1},x_{2})\rightarrow (y*)$ to \shard{1} and \shard{2}. \Shard{1} generates \preacceptt, which is the target message that the attacker records. \Shard{2} generates \preabortt because $x_2$ is locked by $T'$. Consequently, in the second phase of \sbac, both \shard{1} and \shard{2} end up aborting $T$.
    \item $T'$ is eventually aborted, making $x_{2}$ active again.
\end{itemize}


To elicit \preabortt for a transaction $T(x_{1},x_{2})\rightarrow (y*)$ (the output $y*$ is not relevant here) from \shard{1}, the key consideration is to closely precede the transaction with another transaction $T'$ that locks the input managed by the shard (in this case $x_1$ on \shard{1}). The steps look as follows:
\begin{itemize}
    \item The attacker submits  $T'(x_{1},\widetilde{x*})\rightarrow (y*)$ to \shard{1}. This locks $x_1$.
    \item The attacker quickly follows up by submitting $T(x_{1},x_{2})\rightarrow (y*)$ to \shard{1} and \shard{2}. \Shard{1} generates \preabortt because $x_1$ is locked by $T'$, which is the target message that the attacker records. \Shard{2} generates \preacceptt. Consequently, in the second phase of \sbac, both \shard{1} and \shard{2} end up aborting $T$.
     \item $T'$ is eventually aborted, making $x_{1}$ active again.
\end{itemize}

To elicit \acceptt used by the attacks described in \Cref{chainspace-attack-2}, the \attacker simply submits transaction $T$ and observes and records its successful execution. The \attacker has no incentive to record \abortt messages as these are ignored by shards (see \Cref{tab:sbac-attack:2}).

\subsection{\Clientled Cross-Shard Consensus} \label{sec:prerecoring-omniledger}
We show how the \attacker can act as (or collude with) a client to actively elicit and record the target messages, in the context of \clientled cross-shard consensus protocols as illustrated by \Cref{client-led-consensus}.
 
To elicit \preacceptt from \shard{1} for a transaction $T(x_{1},x_{2})\rightarrow (y*)$ (the output $y*$ is not relevant here) from \shard{1}, the key consideration is to closely precede the transaction with another transaction that: \first temporarily spends the inputs managed by at least one other shard (in this case $x_2$ on \shard{2}); and \second to ensure that the preceding transaction is ultimately aborted so that $x_2$ becomes active again. The steps look as follows:

\begin{itemize}
    \item The \attacker submits $T'(x_2,\widetilde{x*})\rightarrow (y*)$ to \shard{2}, where $\widetilde{x*}$ is managed by a different shard. \Shard{2} emits \preaccepttt and marks $x_2$ as inactive.
    \item The \attacker follows up by submitting $T(x_{1},x_{2})\rightarrow (y*)$ to \shard{1} and \shard{2}. \Shard{1} generates \preacceptt, which is the target message that the attacker records. \Shard{2} generates \preabortt because $x_2$ is inactive.
    \item The \attacker submits \abortt to \shard{1} to reactivate $x_1$, and sends \aborttt to \shard{2} to reactivate $x_2$.
\end{itemize}

For the attacks described in \Cref{omniledger-attack-2}, the \attacker needs to elicit \abortt and \acceptt from the target shards. For the former, the \attacker can follow the steps described previously to elicit \preacceptt and \preabortt. To elicit \acceptt, the \attacker simply submits transaction $T$ and observes and records its successful execution. 

\section{\sysname Security \& Correctness}\label{properties}
We show that \sysname guarantees liveness, consistency, and validity similarly to \sbac.

\begin{theorem}
(Liveness~\cite{chainspace}) Under Honest Shards assumption, a transaction $T$ that is proposed to at least one honest concerned node, eventually results in either being committed or aborted, namely all parties deciding \acceptts or \abortts. (From \chainspace~\cite{chainspace}.)
\end{theorem}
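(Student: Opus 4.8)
The plan is to decompose the guarantee into three links of a chain—dissemination, local decision, and aggregation—and then argue that no single faulty party can break any link, using the Honest Shard assumption (\Cref{def:honest-shard}) together with the stateless nature of the Transaction Manager (TM).

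First I would show that once $T$ is proposed to a single honest concerned node, the full pair $(T,s_T)$ reaches every concerned shard. An honest node can forward the transaction to the other nodes of its own shard and to the other concerned shards; since the entity staging $T$ is instructed to contact at least $f+1$ nodes of each shard, and since by the Honest Shard assumption each shard of size $3f+1$ contains at least $2f+1$ honest nodes, at least one honest node in every concerned shard eventually receives $(T,s_T)$. Next I would invoke the liveness of the intra-shard BFT protocol under the same assumption: each honest shard, upon receiving $(T,s_T)$, runs its internal consensus and terminates with a local decision, emitting either \preacceptts or \preabortts carrying the attached $f+1$ signatures. Reading off the state machine of \Cref{fig:byzcuit-fsm}, every concerned shard thus leaves the initial active state and moves either to the \emph{locked} state (on \preacceptts) or updates its sequence numbers (on \preabortts); there is no third branch and no way to get stuck, so the first phase terminates for every honest shard.

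The hard part will be showing that the second phase always completes, since a TM that crashes or behaves byzantinely could refuse to produce the second-phase message and leave input objects locked forever. I would close this gap exactly as the \textbf{Transaction Manager} discussion does: the TM holds no secret and acts only on publicly verifiable, signed first-phase messages, so \emph{any} honest entity—an honest node in a concerned shard wishing to unlock its inputs after a timeout, a competing client, or an external service that periodically closes open \sysname instances—can take over the TM role, collect the (at most $n$) first-phase messages, aggregate them into \acceptts (if all are \preacceptts) or \abortts (otherwise), and forward the result. Because \sysname permits such parties to act asynchronously and concurrently without compromising safety, and because by hypothesis at least one honest entity is willing to drive the protocol, the second-phase message is eventually produced.

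Finally I would observe that the second-phase message, once produced, is delivered to all concerned shards (again by contacting $f+1$ nodes per shard, reaching at least one honest node each), whereupon each shard applies the corresponding transition of \Cref{fig:byzcuit-fsm}—inactivating inputs and creating outputs on \acceptts, or unlocking inputs and bumping sequence numbers on \abortts—and thereby decides. Chaining the three links (dissemination, guaranteed local decisions via intra-shard BFT liveness, and guaranteed aggregation via the take-over argument) shows that every concerned party eventually decides \acceptts or \abortts, establishing liveness.
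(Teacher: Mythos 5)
Your proposal is correct and follows the same skeleton as the paper's proof: liveness of the intra-shard BFT (each honest shard eventually sequences $(T,s_T)$ and emits \preacceptts or \preabortts), followed by propagation of the aggregated \acceptts or \abortts to at least one honest node in every other concerned shard, which then sequences and decides. The one place you genuinely add something is the second phase: the paper's proof simply assumes the first-phase messages are ``aggregated\ldots and sent'' via the Transaction Manager, leaving the possibility of a crashed or byzantine TM to the informal design discussion, whereas you explicitly import the take-over argument (the TM holds no secrets, so any honest party---a timed-out shard node, a competing client, an external service---can collect the signed first-phase messages and complete the aggregation). That makes your version more self-contained on the point where liveness is actually most fragile; the paper's proof is, in exchange, more precise about the quantitative bookkeeping (the $2f+1$ honest nodes emitting first-phase votes, and delivery to $f+1$ nodes of each remote shard so that at least one honest node is reached). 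Neither difference changes the underlying argument, so I would classify your proof as the same approach, completed at its weakest link rather than rerouted.
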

\begin{proof}
We rely on the liveness properties of the byzantine agreement (shards with only $f$ nodes eventually reach consensus on a sequence), and the broadcast from nodes of shards to all other nodes of shards, channelled through the Transaction Manager. Assuming $T$ has been given to an honest node, it will be sequenced withing an honest shard BFT sequence, and thus a \preacceptts or \preabortts will be sent from the $2f+1$ honest nodes of this shard, aggregated into \acceptts or \abortts, and sent to the $f+1$ nodes of the other concerned shards. Upon receiving these messages the honest nodes from other shards will process the transaction within their shards, and the BFT will eventually sequence it. Thus the user will eventually receive a decision from at least $f+1$ nodes of a shard.
\end{proof}
\begin{theorem}
(Consistency~\cite{chainspace}) Under Honest Shards assumption, no two conflicting transactions, namely transactions sharing the same input will be committed. Furthermore, a sequential executions for all transactions exists. (From \chainspace~\cite{chainspace}.)
\end{theorem}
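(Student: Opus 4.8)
The plan is to prove the Consistency theorem in two parts, mirroring its two claims: first that no two conflicting transactions (sharing a common input object) can both be committed, and second that a sequential execution order consistent with all committed transactions exists. Throughout I rely on the Honest Shard assumption (\Cref{def:honest-shard}) and on the finite state machine of \Cref{fig:byzcuit-fsm} describing the life cycle of \sysname objects.

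For the first claim, I would argue at the level of the shard that manages a shared input object. Suppose two transactions $T$ and $T'$ both take some object $x$ as input, and suppose both are committed. The shard managing $x$ is an honest shard, so it processes its local BFT sequence deterministically. When it emits \preacceptts for the first of these transactions to be sequenced, it changes $x$ to the \locked state (bound to that specific pair $(T,s_T)$), and \Cref{fig:byzcuit-fsm} shows that a \locked object cannot be processed by any other transaction until it is either inactivated (on \acceptts) or unlocked (on \abortts). Hence the shard emits \preabortts for whichever of $T,T'$ is sequenced second while $x$ is still \locked. Since committing a transaction requires \emph{all} concerned shards to emit \preacceptts (the TM forms \acceptts only from $n$ \preacceptts messages, as established in the proof of \Cref{th:secure-phase-1}), the second transaction cannot be committed. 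The one subtlety to handle is the replay angle: I must invoke \Cref{th:secure-phase-1} to rule out the adversary manufacturing a spurious \acceptts for the second transaction from stale \prerecorded messages, so that the only way to commit is through genuine unanimous \preacceptts from the current instance.

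For the second claim, I would build the sequential execution by interleaving the per-shard BFT sequences into a global order. Each honest shard totally orders the transactions it sees via its intra-shard consensus; the goal is to show these local orders are mutually compatible, i.e.\ there is no cycle in the ``happens-before'' relation induced across shards by shared objects. I would argue that because an object is consumed (inactivated) at most once and created at most once, the dependency relation on objects is acyclic, and a topological sort of committed transactions respecting every shard's local order yields the desired global sequential execution. I would lean on the liveness theorem already proved to ensure every committed transaction eventually reaches a terminal state in each concerned shard, so the construction is well defined.

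The main obstacle I anticipate is the second claim rather than the first: establishing that no cross-shard cycle of dependencies can arise among committed transactions, since each shard only witnesses its own slice of the global state. The delicate point is that \sysname's defense forces every output shard to also act as an input shard (via dummy objects), so every concerned shard participates in the first phase and sequences the transaction locally; I would use this fact to guarantee that each committed transaction appears in the local order of \emph{all} its concerned shards, which is precisely what rules out inconsistent interleavings and lets the topological sort go through. Tying the dummy-object mechanism to the acyclicity argument is the step I expect to require the most care.
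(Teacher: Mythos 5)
Your proof of the first claim follows essentially the same route as the paper's: commit requires \emph{all} concerned shards to emit \preacceptts, the two conflicting transactions share the honest shard managing the common input, and that shard locks the object for the first transaction it sequences so the second one draws \preabortts. Two small points of divergence are worth noting. First, your argument only covers the window in which $x$ is still \locked; the paper also handles the complementary case in which the second transaction is sequenced \emph{after} the first one has fully committed, when $x$ is no longer locked but \inactiveObj, which likewise forces \preabortts---you should state this case explicitly, since otherwise a transaction arriving after the unlock-or-inactivate transition is unaccounted for. Your explicit appeal to \Cref{th:secure-phase-1} to exclude replay-manufactured \acceptts is a reasonable addition the paper leaves implicit. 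Second, on the serializability claim, you go well beyond the paper, which disposes of it in a single sentence about interleavings; your topological-sort construction over the creates/consumes dependency graph is a sound way to make this rigorous, but it is over-constrained as stated: you require the global order to respect \emph{every} shard's local BFT order, yet two shards may sequence a pair of non-conflicting transactions in opposite orders, making such a sort impossible. The fix is that you do not need to respect local orders at all---non-conflicting transactions commute, so any linearization of the (acyclic, by real-time creation-before-consumption) object-dependency relation suffices, with the first claim guaranteeing no two committed transactions contend for the same input. The dummy-object mechanism, which you lean on for acyclicity, is not actually needed for this part; its role is confined to the replay defense of \Cref{th:secure-phase-2}.
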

\begin{proof}
A transaction is accepted only if some nodes receive \acceptts, which presupposes all shards have provided enough evidence to conclude \preacceptts for each of them. Two conflicting transaction, sharing an input, must share a shard of at least $3f+1$ concerned nodes for the common object---with at most $f$ of them being malicious. Without loss of generality upon receiving the \preacceptts message for the first transaction, this shard will sequence it, and the honest nodes will emit messages for all---and will lock this object until the two phase protocol concludes. Any subsequent attempt to \preacceptts for a conflicting $T'$ will result in a \preabortts and cannot yield a accept, if all other shards are honest majority too. After completion of the first \acceptts the shard removes the object from the active set, and thus subsequent $T'$ would also lead to \preabortts. Thus there is no path in the chain of possible interleavings of the executions of two conflicting transactions that leads to them both being committed.
\end{proof}

\begin{theorem}
(Validity~\cite{chainspace}) Under Honest Shards assumption, a transaction may only be accepted if it is valid according to the smart contract (or application) logic. (From \chainspace~\cite{chainspace}.)
\end{theorem}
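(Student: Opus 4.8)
The plan is to mirror the structure of the preceding Consistency proof, exploiting the fact that acceptance of a transaction is gated on every concerned shard having emitted \preacceptts. First I would recall from the protocol description that a shard emits \preacceptts only after internally verifying that the transaction passes its local checks, which encode the portion of the smart contract (or application) validity condition pertaining to the objects that shard manages. Under the Honest Shard assumption each shard contains at least $2f+1$ honest nodes out of $3f+1$, so the intra-shard BFT consensus faithfully computes this local decision, and an adversary controlling up to $f$ nodes cannot force a shard to emit \preacceptts for a transaction its honest majority deems invalid.

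Next I would argue that \sysname accepts a transaction only if some node receives a well-formed \acceptts message, and by \Cref{th:secure-phase-1} together with the protocol's aggregation rule this message can only be formed from $n$ genuine \preacceptts messages, one per concerned shard. Hence acceptance certifies that every concerned shard passed its local validity check. The remaining step is to observe that the conjunction of these per-shard checks covers the full smart contract logic: in the UTXO model each input, dummy, and output object is managed by exactly one shard, so every constraint the contract imposes on an object is verified by the shard responsible for it, and the transaction's global validity decomposes into the collection of local checks.

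The main obstacle will be establishing cleanly that this decomposition of validity across shards is \emph{complete}---that no validity constraint spans objects in a way that escapes every individual shard's local view. I would handle this by appealing to the contract model inherited from \chainspace, where validity is defined object-locally and the concerned shards are precisely those managing the transaction's inputs, dummy inputs, and outputs; since \sysname forces every output-only shard to also act as an input shard via dummy objects, all relevant shards participate in the first phase and contribute their check. Combining these observations, an invalid transaction is rejected by at least one honest concerned shard, which emits \preabortts, so the TM cannot aggregate an \acceptts message and the transaction is never accepted.
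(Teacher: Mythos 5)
Your overall skeleton matches the paper's proof: acceptance presupposes that every concerned shard emitted \preacceptts, the Honest Shard assumption (at most $f$ faulty nodes per shard) guarantees each shard's local decision is the honest one, and a single honest shard emitting \preabortts forces the final decision to \abortts. Up to that point you are reconstructing essentially the same argument, and invoking \Cref{th:secure-phase-1} to justify that \acceptts can only be assembled from genuine \preacceptts messages is a reasonable (if not strictly necessary) reinforcement.

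However, the step you yourself flag as the main obstacle is resolved incorrectly, and this is a genuine gap. You claim that in the \chainspace contract model ``validity is defined object-locally,'' so that global validity decomposes into per-object checks, each performed by the shard managing that object. That premise is false: \chainspace-style checkers evaluate the \emph{entire} transaction trace, and typical contracts impose constraints spanning objects held by different shards---\eg in the \textsf{CSCoin} transfer $T(x_1,x_2)\rightarrow(y_1,y_2)$, validity requires conservation of the total balance across $x_1, x_2, y_1, y_2$, which are split between \shard{1} and \shard{2}. Under your object-local decomposition no single shard would ever verify that constraint, so an invalid transaction satisfying all object-local conditions but violating a cross-object condition would slip through your argument. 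The paper closes this hole differently: each shard checks not just that its own objects are active, but the full contract logic of \emph{every contract that one of its objects belongs to}, and since any contract represented in the transaction has at least one of its objects managed by some concerned shard, every contract call is checked in its entirety by at least one honest-majority shard. In other words, the correct coverage argument is per-contract (each shard running the whole checker), not per-object; your proof needs that substitution to go through.
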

\begin{proof}
A transaction is committed only if some nodes conclude that \acceptts, which presupposes all shards have provided enough evidence to conclude \preacceptts for each of them. The concerned nodes include at least one shard per input object for the transaction; for any contract logic represented in the transaction, at least one of those shards will be managing object from that contract. Each shard checks the validity rules for the objects they manage (ensuring they are active) and the contracts those objects are part of (ensuring the transaction is valid with respect to the contract logic) in order to \preacceptts. Thus if all shards say \preacceptts to conclude that \acceptts, all object have been checked as active, and all the contract calls within the transaction have been checked by at least one shard---whose decision is honest due to at most $f$ faulty nodes. If even a single object is inactive or locked, or a single trace for a contract fails to check, then the honest nodes in the shard will emit \preabortts, and the final decision will be \abortts.
\end{proof}

\end{document}